\newtheorem{theorem}{Theorem}
\newtheorem{remark}{Remark}
\newtheorem{lemma}{Lemma}
\newtheorem{corollary}{Corollary}
\newcommand{\tblack}{\textcolor{black}}
\DeclareMathOperator{\diag}{diag}
\newcommand{\R}{\mathbb{R}}
\newcommand{\C}{\mathbb{C}}
\newcommand{\conv}{\mathbin{\ast}}
\begin{document}
\title{A Deep State-Space Model Compression Method using Upper Bound on Output Error}
\author{Hiroki Sakamoto and Kazuhiro Sato\thanks{H. Sakamoto and K. Sato are with the Department of Mathematical Informatics, Graduate School of Information Science and Technology, The University of Tokyo, Tokyo 113-8656, Japan, email: soccer-books0329@g.ecc.u-tokyo.ac.jp (H. Sakamoto), kazuhiro@mist.i.u-tokyo.ac.jp (K. Sato) }}

\maketitle
\thispagestyle{empty}
\pagestyle{empty}

\begin{abstract}
We study deep state-space models (Deep SSMs) that contain linear quadratic-output (LQO) systems as internal blocks and present a compression method with a provable output error guarantee. 
We first derive an upper bound on the output error between two Deep SSMs and show that the bound can be expressed in terms of the $h^2$-error norms between the layerwise LQO systems. 
\tblack{In particular, we show that reducing the $h^2$ approximation errors of the LQO systems placed in shallow layers is effective in reducing the derived upper bound on the output error.}
Next, we formulate an optimization problem for the derived upper bound and develop a gradient-based MOR method. 
In the numerical experiments, using the IMDb task from the LRA benchmark, we demonstrate the effectiveness of the proposed upper-bound-based compression method. 
In particular, \tblack{we show that the number of trainable parameters can be reduced by approximately 60\% without retraining while maintaining the performance of the original model.}
\end{abstract}

\begin{IEEEkeywords}
Model compression; Deep State-Space Models; Optimal $h^2$ Model Order Reduction; Linear Quadratic Output systems
\end{IEEEkeywords}

\section{Introduction} \label{sec:intro}
Deep state-space models (Deep SSMs)~\cite{gu2022efficiently,gu2022parameterization, smith2023simplified, gu2023mamba} are deep models that incorporate linear state-space models (SSMs) into intermediate layers, and they have attracted attention as sequence models that can efficiently handle long-range dependencies and nonlinearities.
As representative examples, S5~\cite{smith2023simplified} and Mamba~\cite{gu2023mamba} achieve state-of-the-art performance on various tasks in the Long Range Arena (LRA) benchmark~\cite{tay2021long} without using the attention mechanism employed in Transformers~\cite{vaswani2017attention}, and their effectiveness has been reported on real-world data~\cite{wang2024state}.
In theory, Deep SSMs combined with simple nonlinear layers have been shown to possess Transformer-comparable capabilities in dynamic token selection and certain nonparametric regression settings~\cite{nishikawa2025state}.
Achieving high performance often requires a sufficient number of parameters; in particular, for Deep SSMs, one can build high-performance models by enlarging the parameter size of the embedded linear SSMs.

When deploying trained models to a variety of tasks, it is desirable to obtain compact models with fewer parameters while maintaining the performance of large models.
To construct small-scale Deep SSMs, one may apply general model-compression techniques developed for deep learning models.
Well-known approaches include pruning, quantization, and knowledge distillation, which are effective also for compressing Deep SSMs~\cite{cheng2018model}.
On the other hand, it has recently been recognized that model order reduction (MOR), a classical topic in systems and control theory, is effective as a compression method for Deep SSMs~\cite{ezoe2024model, forgione2024model, gwak2024layer, sakamoto2025compression,chahine2026curious}.

MOR-based compression seeks to reduce the number of parameters of the linear SSMs placed within Deep SSMs while preserving the performance of the original large model.
\cite{ezoe2024model} and \cite{sakamoto2025compression} used Balanced Truncation (BT)- and $\mathcal{H}^2$-based MOR, respectively, as initialization for retraining compressed models.
\cite{forgione2024model} introduced MOR-oriented regularization into the training objective and applied BT to obtain one-shot compressed models.
\cite{gwak2024layer} proposed a post-training state-pruning method based on an $\mathcal{H}^{\infty}$-based indicator, while \cite{chahine2026curious} performed MOR during training to construct compressed SSMs.

As shown in Fig.~\ref{fig:comparison}, in these existing studies, compression is performed by applying MOR individually to each linear SSM placed at intermediate layers, based on various indicators and training schemes.
While this approach constructs reduced models that approximate the input-output behavior of each individual linear SSM, it can fail to sufficiently approximate the final output of the original Deep SSM as a whole.

In this work, to adequately approximate the overall performance of the original Deep SSM, we construct reduced models that reflect inter-layer interactions (see Fig.~\ref{fig:comparison}).
In particular, to the best of our knowledge, this is the first study—from the viewpoint of systems and control theory—to propose a compression method that directly guarantees the overall output performance of a Deep SSM.
To facilitate theoretical analysis based on MOR theory~\cite{benner2021gramians,reiter2025h,zulfiqar2024time}, \tblack{we consider Deep SSMs that internally contain linear quadratic-output (LQO) systems, which can be interpreted as a generalization of the S5 architecture~\cite{smith2023simplified} that contains a quadratic activation function immediately after each linear state-space block,} and we develop a compression method with a provable upper bound on the output error.

The contributions of this paper are:
\begin{itemize}
  \item To construct a reduced Deep SSM that minimizes the output error~$\|s_{\mathrm{out}}-\hat{s}_{\mathrm{out}}\|_{\ell^{\infty}_L}$ for any input sequence~$s_{\mathrm{in}}$—and since directly optimizing this quantity is challenging—we derive an upper bound on the output error between two Deep SSMs, as illustrated in Fig.~\ref{fig:comparison} (top).
  \tblack{In particular, we show that reducing the $h^2$ approximation errors of the LQO systems placed in shallow layers contributes to reducing the derived upper bound on the output error.}
  \tblack{Furthermore, the derived bound shows that minimizing the $h^2$ error of each individual SSM within a Deep SSM, as shown in Fig.~\ref{fig:comparison} (bottom), leads to minimizing an upper bound of the overall output error.
  This result provides a theoretical basis for existing layerwise MOR-based compression for Deep SSMs~\cite{ezoe2024model,forgione2024model,gwak2024layer,sakamoto2025compression,chahine2026curious}.}
  Finally, we introduce an optimization algorithm with a stationary point guarantee that minimizes the bound while preserving the unique properties of Deep SSMs.
  \item We show numerically that the proposed upper-bound-based compression constructs an accurate compression model.
  \tblack{In particular, we show that, by using the design principle for MOR-based compression, namely, assigning larger reduced state dimensions to shallow layers to reduce the derived upper bound, a compressed model with high accuracy can be constructed without retraining (in a one-shot manner).}
  Moreover, for the retraining-free compression methods~\cite{gwak2024layer}, when compared in terms of accuracy at the same state dimension (cf.~\cite[Fig.2]{gwak2024layer}), the proposed method—despite differences in the Deep SSM architecture—constructs compressed models with superior performance.
  This provides a practical contribution: it enables low-cost, high-performance deployment in settings where retraining is impractical or impossible—e.g., under strict compute constraints.
\end{itemize}

The paper is organized as follows. Section~\ref{sec:system} introduces discrete-time complex LQO systems. 
The problem is formulated in Section~\ref{sec:problem_setting}.
Using the output error bound from Section~\ref{sec:error_analysis}, 
Section~\ref{sec:proposed_mor} introduce our compression strategy and optimization algorithm.
Section~\ref{sec:experiment} reports experiments, and Section~\ref{sec:summary} concludes.

\paragraph*{Notation}
For vectors, $\|\cdot\|$ denotes the Euclidean norm and $\|\cdot\|_\infty$ denotes the infinity norm.  
For matrices, $\|\cdot\|_2$ denotes the operator (spectral) norm and $\|\cdot\|$ denotes the Frobenius norm.  
Let $(X,\|\cdot\|)$ be a normed vector space (e.g., $X=\R^m$ or $\C^m$ with the Euclidean norm) and define, for $1\le p\le\infty$, $\ell^p(X):=\left\{x=(x_k)_{k\ge0}\subset X \mid \sum_{k\geq 0}\|x_k\|^{p}< \infty\right\}$, and $\ell^{\infty}(X):=\left\{x=(x_k)_{k\ge0}\subset X \mid \sup_{k\geq 0}\|x_k\|< \infty\right\}$.
Throughout, we write $\ell^p:=\ell^p(X)$ for $1\le p\le\infty$.
For \(x\in \ell^{p}\), define $\|x\|_{\ell^{p}}:=\big(\sum_{k\ge0}\|x_k\|^p\big)^{1/p}$ for $1\le p<\infty$; 
for a finite horizon $L$, define 
$\|x\|_{\ell^{p}_{L}}:=\big(\sum_{k=0}^{L-1}\|x_k\|^p\big)^{1/p}$ for $1\le p<\infty$ and 
$\|x\|_{\ell^{\infty}_{L}}:=\max_{0\le k\le L-1}\|x_k\|$.

The conjugate transpose is $(\cdot)^{*}$ (so $A=A^{*}$ means Hermitian; for real matrices, symmetric), and $(\cdot)^{\top}$ denotes transpose.
For a matrix $X$, $\mathrm{vec}(X)$ stacks its columns into a single vector.
The symbol $\otimes$ denotes the Kronecker product, and $\odot$ denotes the elementwise (Hadamard) product.

For a real-valued function $f$, $\mathrm{d}f$ denotes its (total) differential; for a perturbation $\mathrm{d}\theta$ of $\theta$, we write $\mathrm{d}f=\langle\nabla_{\theta}f,\mathrm{d}\theta\rangle$, where the inner product is $\langle A,B\rangle:=\mathrm{Re}\big(\mathrm{tr}(A^{*}B)\big)$.
For multiple variables $\theta=(\theta_1,\theta_2,\dots)$, we write $\mathrm{d}f=\sum_j\langle\nabla_{\theta_j}f,\mathrm{d}\theta_j\rangle$.

The symbol $\conv$ denotes discrete-time convolution, defined for $k\ge0$ by $(x\conv h)[k]:=\sum_{m=0}^{k} x[m]h[k-m]$. 
For $1\le p,q,r\le\infty$ satisfying $\tfrac{1}{p}+\tfrac{1}{q}=1+\tfrac{1}{r}$, $x\in \ell^{p}$ and $h\in \ell^{q}$ obey the Young's convolution inequality $\|x\conv h\|_{\ell^{r}}\ \le\ \|x\|_{\ell^{p}}\;\|h\|_{\ell^{q}}$.
For the finite-horizon norms $\ell^{p}_{L}$, the same bounds hold.

\begin{figure}[htbp]
 \centering
 \includegraphics[width=\linewidth]{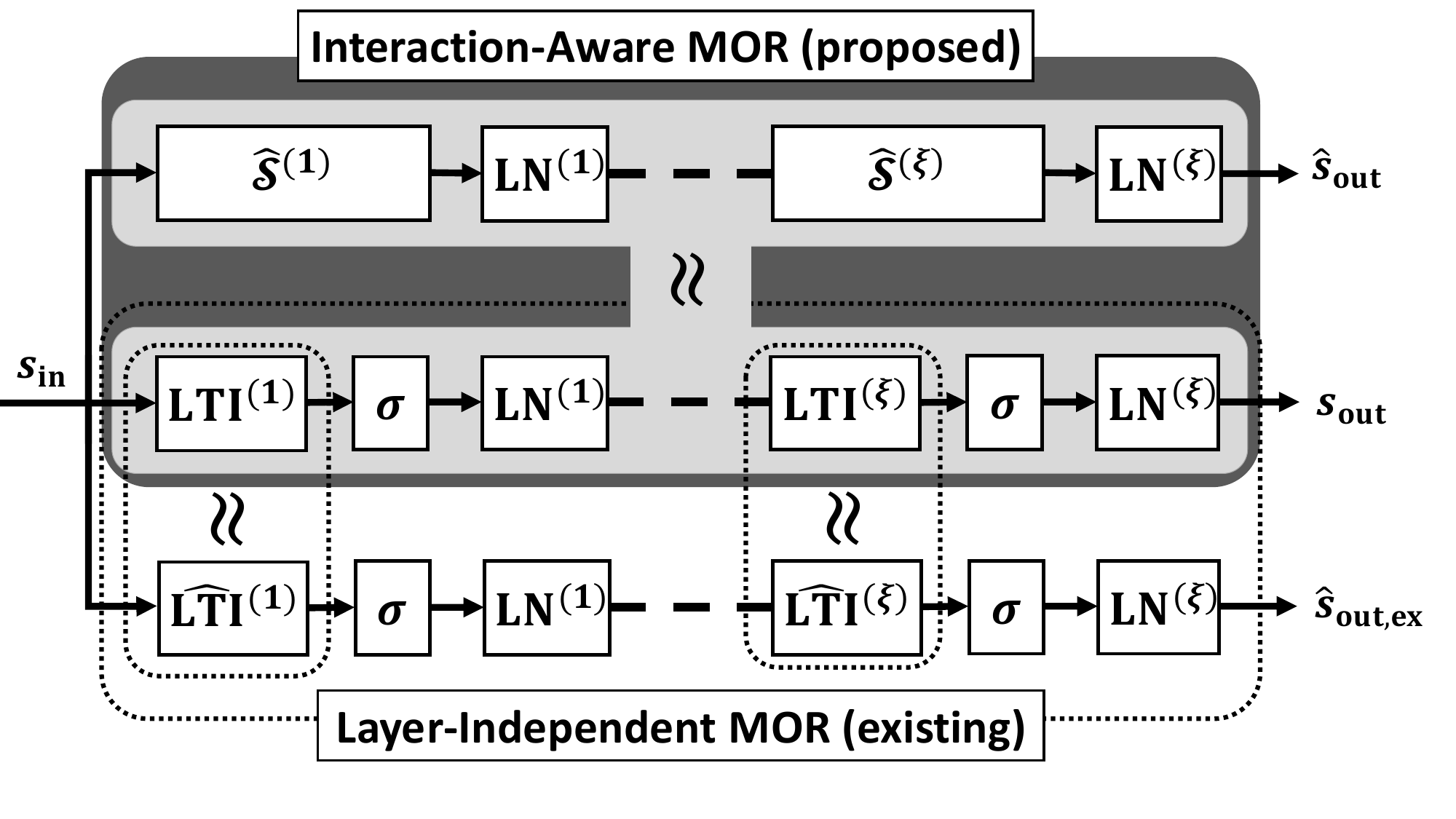}
 \caption{
 Comparison between existing MOR methods and the proposed method for $\xi$-layer Deep SSMs.
 Existing methods perform MOR independently on the linear time-invariant (LTI) subsystems within a Deep SSM and produce \(\hat{s}_{\mathrm{out,ex}}\).
 In contrast, the proposed method constructs a reduced Deep SSM with reduced LQO systems $\{\hat{\mathcal{S}}^{(i)}\}_{i=1}^{\xi}$ that approximates the pretrained Deep SSM’s output \(s_{\mathrm{out}}\) while accounting for inter-layer interactions, \tblack{the quadratic output maps \(\sigma\) of the LQO systems}, and LayerNorm (LN); in particular, it minimizes \(\|s_{\mathrm{out}}-\hat{s}_{\mathrm{out}}\|_{\ell^{\infty}_L}\) for a given input sequence \(s_{\mathrm{in}}\).
 \label{fig:comparison}}
\end{figure}


\section{Discrete-time complex LQO systems}\label{sec:system}
In this section, we describe the general properties of the SSMs that constitute the Deep SSM considered in this study.
Specifically, we consider the following discrete-time complex LQO system:
\begin{empheq}[left={\mathcal{S}}:\empheqlbrace]{equation}
  \begin{aligned}
    x_{k} &= A x_{k-1} + B u_k,\\
    y_k &= C x_k + M\,(x_k \otimes x_k),
  \end{aligned}
  \label{eq:d-LQO}
\end{empheq}
where $u_{k}\in \mathbb{C}^m$, $x_{k}\in \mathbb{C}^n$, and $y_{k}\in \mathbb{C}^{p}$ denote the input, state, and output, respectively, and
$A\in \mathbb{C}^{n\times n}$, $B\in \mathbb{C}^{n\times m}$, $C\in \mathbb{C}^{p\times n}$,
$M=\big[\,\mathrm{vec}(M_1),\ldots,\mathrm{vec}(M_p)\,\big]^{\top}\in\mathbb{C}^{p\times n^2}$.
For any $i=1,\dots,p$, let $U_i\in \mathbb{C}^{c\times n}$ and $M_i=U_i^{\ast}U_i\in \mathbb{C}^{n\times n}$ be Hermitian.
Assume that $A$ is Schur stable, i.e., all eigenvalues lie strictly inside the open unit disk.

The Volterra kernels of \eqref{eq:d-LQO} are
\begin{align}
  h_1[t] = C A^{t} B, \quad h_2[t_1, t_2] = M\big(A^{t_1}B \otimes A^{t_2} B\big),
\end{align}
and, with $x_{-1}=0$, \eqref{eq:d-LQO} is equivalent to the convolution representation
\begin{align}\label{eq:conv_dssm}
  y_k&=\sum_{t=0}^{k} h_1[t]\cdot u_{k-t}
  \;+\;\sum_{t_1=0}^{k}\sum_{t_2=0}^{k} h_2[t_1,t_2]\cdot\big(u_{k-t_1}\otimes u_{k-t_2}\big).
\end{align}
On the finite horizon $\{0,1,\dots,L-1\}$, the time-limited $h^2$-norm $\|\mathcal{S}\|_{h^2_{L}}$ of \eqref{eq:d-LQO} is defined by
\begin{align}
    \|\mathcal{S}\|_{h^2_{L}}^{2}:=\|h_1\|_{\ell^2_{L}}^{2} + \|h_2\|_{\ell^2_{L}}^{2}.
\end{align}
Let $P_L$ be the solution of the finite-horizon Lyapunov equation
\begin{align}
    P_L \;=\; A P_L A^\ast + B B^\ast \;-\; A^{L} B B^{\ast}{(A^{\ast})}^{L}.
\end{align}
\tblack{By adapting arguments from \cite{reiter2025h,zulfiqar2024time} for continuous-time LQO systems to the discrete-time setting, one can verify that}
\begin{align}
    \|h_1\|_{\ell^2_{L}}^{2} = \mathrm{tr}\big(C P_L C^\ast\big), \:\: \|h_2\|_{\ell^2_{L}}^{2} = \sum_{k=1}^{p}\mathrm{tr}\big(P_L M_k P_L M_k\big).
    \label{eq:h2_lyapunov_formula}
\end{align}

A reduced-order model (ROM) for \eqref{eq:d-LQO} is given by
\begin{empheq}[left={\hat{\mathcal{S}}}:\empheqlbrace]{equation}
  \begin{aligned}
    \hat{x}_{k} &= \hat{A}\hat{x}_{k-1} + \hat{B}u_k,\\
    \hat{y}_k &= \hat{C}\hat{x}_k + \hat{M}\,(\hat{x}_k \otimes \hat{x}_k),
  \end{aligned}
  \label{eq:d-r-LQO}
\end{empheq}
where $u_{k}\in \mathbb{C}^m$, $\hat{x}_{k}\in \mathbb{C}^r$, $\hat{y}_{k}\in \mathbb{C}^{p}$,
$\hat{A}\in \mathbb{C}^{r\times r}$, $\hat{B}\in \mathbb{C}^{r\times m}$, $\hat{C}\in \mathbb{C}^{p\times r}$,
$\hat{M}=\big[\,\mathrm{vec}(\hat{M}_1),\ldots,\mathrm{vec}(\hat{M}_p)\,\big]^{\top}\in\mathbb{C}^{p\times r^2}$,
with $\hat{A}$ stable and each $\hat{M}_i\in \mathbb{C}^{r\times r}$ Hermitian.

\tblack{One can derive a discrete-time, time-limited analogue of the output error bound from~\cite{benner2021gramians} for the continuous-time case applied to \eqref{eq:d-LQO} and \eqref{eq:d-r-LQO} as follows:}
\begin{align}
    &\|y-\hat{y}\|_{\ell_{L}^{\infty}}^{2}\\
    &\leq \big(\|h_1-\hat{h}_{1}\|_{\ell_{L}^{2}}^2 + \|h_2-\hat{h}_{2}\|_{\ell_{L}^{2}}^2\big)\,(1 + \|u\|_{\ell_{L}^{2}}^2)\,\|u\|_{\ell_{L}^{2}}^2 \nonumber\\
    &= \|\mathcal{S}-\hat{\mathcal{S}}\|_{h_{L}^{2}}^2 \,(1 + \|u\|_{\ell_{L}^{2}}^2)\,\|u\|_{\ell_{L}^{2}}^2.
    \label{eq:single_h2_ineq}
\end{align}
Let $\tilde P_L$ and $\hat P_L$ be the solutions to the finite-horizon Sylvester/Lyapunov equations
\begin{align}
    \tilde P_L &= A\tilde P_L\hat A^\ast + B\hat B^\ast - A^{L}B\hat B^\ast(\hat A^\ast)^{L},\label{eq:Ptilde_t}\\
    \hat P_L &= \hat A\hat P_L\hat A^\ast + \hat B\hat B^\ast - \hat A^{L}\hat B\hat B^\ast(\hat A^{\ast})^{L}.\label{eq:Phat_t}
\end{align}
\tblack{Substituting the formulas in \eqref{eq:h2_lyapunov_formula} into
$\|\mathcal{S}-\hat{\mathcal{S}}\|_{h_L^2}^2=\|h_1-\hat h_1\|_{\ell_L^2}^2+\|h_2-\hat h_2\|_{\ell_L^2}^2$, we obtain}
\begin{align}
\|\mathcal{S}-\hat{\mathcal{S}}\|_{h_{L}^{2}}^{2}
&= \mathrm{tr}\big(C P_L C^{*}\big)
+\mathrm{tr}\big(\hat C \hat P_L \hat C^{*}\big)
-2\,\mathrm{Re}\,\mathrm{tr}\big(C \tilde P_L \hat C^{*}\big) \nonumber\\
&\quad
+\sum_{k=1}^{p}\Big(
\mathrm{tr}\big(P_L M_k P_L M_k\big)
+\mathrm{tr}\big(\hat P_L \hat M_k \hat P_L \hat M_k\big) \nonumber\\
&\qquad\qquad\quad-2\,\mathrm{Re}\,\mathrm{tr}\big(\tilde P_L^{*} M_k \tilde P_L \hat M_k\big)
\Big).
\label{eq:single_h2_error}
\end{align}
From \eqref{eq:single_h2_ineq}, for any sufficiently small input norm, making the $h_{L}^{2}$ error \eqref{eq:single_h2_error} small guarantees a sufficiently small output error.
\tblack{Directly optimizing the output error on the left-hand side of~\eqref{eq:single_h2_ineq} is difficult (see Remark~\ref{rem:exi_difficult}).
Therefore, we instead minimize the system-level upper-bound term $\|\mathcal{S}-\hat{\mathcal{S}}\|_{h_L^2}^2$.}


\section{Problem setting}\label{sec:problem_setting}
\subsection{Deep SSMs with Linear Quadratic-Output Systems}
In this work, \tblack{with a view to performing computationally efficient MOR,} we consider a $\xi$-layer LQO-type Deep SSM as shown in Fig.~\ref{fig:structure_of_dssm}.
At the input layer, an input sequence $(s_k)_{k=0}^{L-1}$ with feature dimension $H$ is given and transformed into the $m$-dimensional input of the first intermediate layer, $s_{\mathrm{in}, k}=u_k^{(1)}\in\mathbb{R}^m$.
For a general $i$-th intermediate layer ($i=1,2,\dots,\xi$), the input $u_k^{(i)}$ is mapped to the output $y_k^{(i)}$ by the following complex LQO system corresponding to~\eqref{eq:d-LQO}:
\begin{empheq}[left={\mathcal{S}^{(i)}:\empheqlbrace}]{equation}
  \begin{aligned}
    x_{k}^{(i)} &= A^{(i)}x_{k-1}^{(i)} + B^{(i)}u_k^{(i)},\\
    y_k^{(i)} &= C^{(i)}x_k^{(i)} + M^{(i)}\bigl(x_k^{(i)}\otimes x_k^{(i)}\bigr),
  \end{aligned}
  \label{eq:d-LQO-dssm}
\end{empheq}
where $u_{k}^{(i)}\in \mathbb{R}^m$, $x_{k}^{(i)}\in \mathbb{C}^n$, $y_{k}^{(i)}\in \mathbb{C}^{m}$,
$A^{(i)}\in \mathbb{C}^{n\times n}$ is diagonal and Schur stable,
$B^{(i)}\in \mathbb{C}^{n\times m}$, $C^{(i)}\in \mathbb{C}^{m\times n}$,
$M^{(i)}\in \mathbb{C}^{m\times n^2}$, and for any $j=1,\dots,m$,
$U_j^{(i)}\in \mathbb{C}^{c\times n}$ with $M_j^{(i)}=(U_j^{(i)})^{\ast}U_j^{(i)}\in \mathbb{C}^{n\times n}$ Hermitian.
\tblack{Here, the diagonal structure of $A^{(i)}$ is not an additional diagonalizability assumption on an arbitrary dense matrix, but an architectural property of the target Deep SSM class.
In particular, $A^{(i)}=\operatorname{diag}(\Lambda^{(i)})$ is directly parameterized, as in existing Deep SSM architectures~\cite{gu2022efficiently,gu2022parameterization, smith2023simplified, gu2023mamba}, and no dense matrix is diagonalized during training, inference, or compression.}
For stable training, we apply a residual connection to the real parts of the input $u_k^{(i)}$ and the output $y_k^{(i)}$, followed by layer normalization (LN):
\begin{align}
  z_k^{(i)}&=u_k^{(i)}+\mathrm{Re}\bigl(y_k^{(i)}\bigr), \label{eq:dssm_residual}\\
  u_{k}^{(i+1)}&=\mathrm{LN}_{\gamma_1^{(i)},\,\gamma_2^{(i)}}\!\bigl(z_k^{(i)}\bigr), \label{eq:dssm_LN}
\end{align}
where, for $\varepsilon>0$, LN is defined by
\begin{align}
  &\mathrm{LN}_{\gamma_1^{(i)},\,\gamma_2^{(i)}}(z_k^{(i)})
  =\gamma_1^{(i)}\odot \frac{z_k^{(i)}-\mu_{k,i}}{\sigma_{k,i}}
  +\gamma_2^{(i)}, \label{eq:LN}\\
  &\mu_{k,i}:=\frac{1}{m}\mathbf{1}^{\top}z_{k}^{(i)},\quad
  \sigma_{k,i}:=\sqrt{\frac{1}{m}\bigl\|z_k^{(i)}-\mu_{k,i}\mathbf{1}\bigr\|^2+\varepsilon}.
\end{align}
Note that $\gamma_1^{(i)}$ and $\gamma_2^{(i)}$ are the learnable affine parameters of LN at layer $i$, representing the per-feature scale and bias, respectively.
The sequence $(s_{\mathrm{out}, k})_{k=0}^{L-1}:=(u_k^{(\xi+1)})_{k=0}^{L-1}$ obtained in this manner is the final output of the intermediate layers of the Deep SSM.
The sequence $(s_{\mathrm{out}, k})_{k=0}^{L-1}$ is then mapped by an output layer to an output of the desired shape.

\begin{figure}[htbp]
 \centering
 \includegraphics[width=\linewidth]{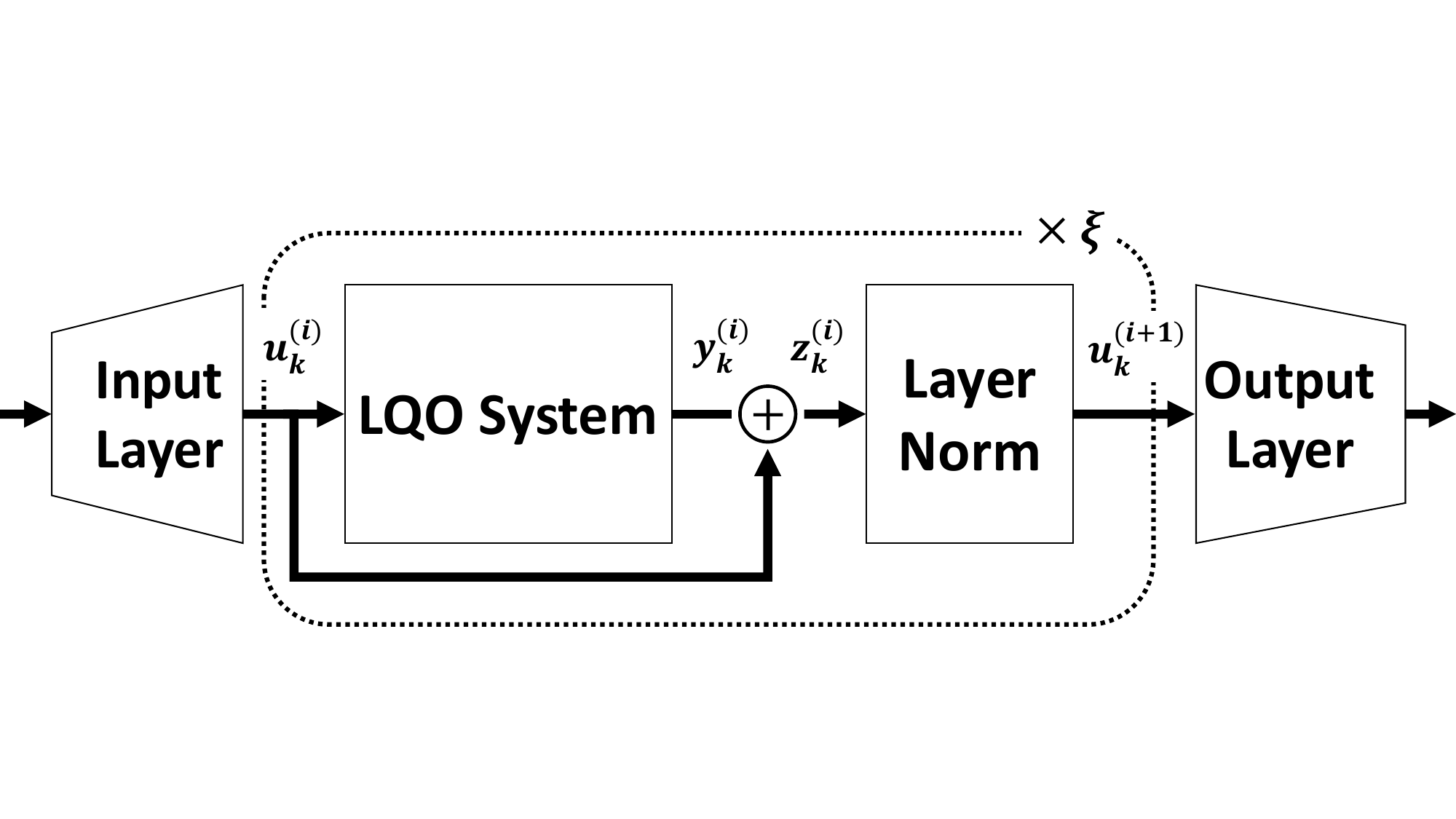}
 \caption{Structure of the Deep SSM considered in this work. \label{fig:structure_of_dssm}}
\end{figure}

\begin{remark}
The Deep SSM used in this study can be interpreted as an LQO-type extension of S5~\cite{smith2023simplified} when the activation immediately after each linear state-space block is the quadratic function $\sigma(x)=(|x_1|^2,\dots,|x_m|^2)\in\mathbb{R}^{m}$.
In S5, the input-output relation from $u_k$ to $y_k$ is expressed as
\begin{empheq}[left=\empheqlbrace]{equation}
  \begin{aligned}
    x_{k} &= A_{\mathrm{S5}}\,x_{k-1} + B_{\mathrm{S5}}\,u_k,\\
    y_k &= \sigma\bigl(C_{\mathrm{S5}}\,x_k\bigr).
  \end{aligned}
\end{empheq}
For real-valued state variables, this relation can be written in the LQO form~\eqref{eq:d-LQO-dssm} by taking $A=A_{\mathrm{S5}}$, $B=B_{\mathrm{S5}}$, $C=0$, and
$M_j = C_{\mathrm{S5},\,j}^{\top}C_{\mathrm{S5},\,j}$, where $C_{\mathrm{S5},\,j}$ denotes the $j$-th row of $C_{\mathrm{S5}}$.
In this work, we model the input-output relation from $u_k$ to $y_k$ in Deep SSMs using more general $C$ and $M$ matrices.
\end{remark}

\begin{remark}
\tblack{The LQO system~\eqref{eq:d-LQO} can be viewed as a Wiener-type block-oriented system, where a linear dynamical system is followed by a static quadratic output map~\cite{SchoukensTiels2017Survey}.
Since the Deep SSM considered in this paper consists of cascaded LQO-type blocks, its training is related, in a broad sense, to block-oriented nonlinear system identification~\cite{SchoukensTiels2017Survey}.}
\end{remark}

\subsection{MOR-Based Compression of Deep SSMs}\label{subsec:mor_based_compression}
\tblack{Existing MOR-based compression methods for Deep SSMs~\cite{ezoe2024model,forgione2024model,gwak2024layer,sakamoto2025compression,chahine2026curious} reduce the internal SSMs in a layerwise manner, using criteria such as BT, $\mathcal{H}^2$-based MOR, $\mathcal{H}^{\infty}$-based pruning scores, or in-training MOR.
Although these methods are based on different reduction criteria, they commonly aim to preserve the layerwise input-output behavior of the internal SSMs. However, it has not been clearly clarified how the layerwise $h^2$ approximation errors affect the overall output error, nor, from the viewpoint of output error, what model reduction strategy is most effective.}

In this work, given an input sequence $(s_{\mathrm{in},k})_{k=0}^{L-1}$, we construct a reduced Deep SSM that approximates the output $(s_{\mathrm{out},k})_{k=0}^{L-1}$ of a $\xi$-layer Deep SSM.
We refer to the pretrained model whose layers are $n$-dimensional complex LQO systems~\eqref{eq:d-LQO} as an $n$-dimensional Deep SSM, and to the model whose $i$-th layer is replaced by the ROM~\eqref{eq:d-r-LQO} with $r=r_i$ as an $r_i$-dimensional Deep SSM.
Unlike prior MOR-based approaches for Deep SSMs~\cite{ezoe2024model, forgione2024model, gwak2024layer, sakamoto2025compression,chahine2026curious}, our goal is to construct an $r_i$-dimensional Deep SSM that minimizes an upper bound on the output error $\|s_{\mathrm{out}}-\hat{s}_{\mathrm{out}}\|_{\ell_{L}^{\infty}}$ for any input sequence $s_{\mathrm{in}}$; see Fig.~\ref{fig:comparison}.
\tblack{Section~\ref{sec:experiment} shows that, unlike the methods in~\cite{ezoe2024model, sakamoto2025compression}, this enables the construction of a high-performance compressed model even without retraining.} 
We note that directly optimizing $\|s_{\mathrm{out}}-\hat{s}_{\mathrm{out}}\|_{\ell_{L}^{\infty}}$ is difficult due to the complexity of the function (see Remark~\ref{rem:exi_difficult}).

\section{Output Error Analysis for Deep SSMs}\label{sec:error_analysis}
For any input sequence $(s_{\mathrm{in},k})_{k=0}^{L-1}$, we aim to construct a reduced Deep SSM that minimizes the output error
\begin{align}\label{eq:exi}
    e_{\xi}:=\|s_{\mathrm{out}}-\hat{s}_{\mathrm{out}}\|_{\ell_{L}^{\infty}}
\end{align}
of Deep SSMs, and derive an upper bound on $e_{\xi}$ that depends on the input.
Hereafter, for the $i$-th intermediate layer of the $n$-dimensional Deep SSM and the $r_i$-dimensional Deep SSM, denote their LQO inputs/outputs by $(u_k^{(i)}, y_k^{(i)})$ and $(\hat{u}_k^{(i)}, \hat{y}_k^{(i)})$, respectively, for all $k$, and assume $u_k^{(1)}=\hat{u}_k^{(1)}(=s_{\mathrm{in},k})$.


\begin{theorem}\label{thm:error_analysis_general}
Let $\mathcal{S}^{(i)}=(A^{(i)}, B^{(i)}, C^{(i)}, M^{(i)})$ denote the $n$-dimensional Deep SSM at the $i$-th layer of the Deep SSM in~\eqref{eq:d-LQO-dssm}, and let $\hat{\mathcal{S}}^{(i)}=(\hat{A}^{(i)}, \hat{B}^{(i)}, \hat{C}^{(i)}, \hat{M}^{(i)})$ denote the $r=r_i$-dimensional SSM at the $i$-th layer of the $r_i$-dimensional Deep SSM.
Then
\begin{align}
  e_{\xi}\;&\leq\;\sum_{i=1}^{\xi} G_i\,
  \|\mathcal{S}^{(i)}-\hat{\mathcal{S}}^{(i)}\|_{h^2_L}\,\cdot\bigl(\,\|\hat{u}^{(i)}\|_{\ell_{L}^2}\,\sqrt{1+\|\hat{u}^{(i)}\|_{\ell_{L}^{2}}^{2}}\,\bigr),\label{eq:dssm_output_error}
\end{align}
where, with $\omega := \max_{i}\bigl(\operatorname{Lip}(\mathrm{LN}_{\gamma_1^{(i)},\gamma_2^{(i)}})\bigr)$ denoting the maximum Lipschitz constant of LN~\eqref{eq:LN}, 
\begin{align}
  G_i &= \omega^{\xi-i+1}\,\Bigl(\prod_{j=i+1}^{\xi} g_j\Bigr),\\
  g_j &= 1 + \sqrt{L}\cdot\left\{\|h_1^{(j)}\|_{\ell_{L}^{2}} + \|h_2^{(j)}\|_{\ell_{L}^{2}}\bigl(\|u^{(j)}\|_{\ell_{L}^2} + \|\hat{u}^{(j)}\|_{\ell_{L}^2}\bigr)\right\}.
\end{align}
\end{theorem}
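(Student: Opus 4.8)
The plan is to bound $e_\xi$ by a layerwise recursion carried out entirely in the $\ell_{L}^{\infty}$ norm, tracking the input discrepancy $\delta^{(i)} := u^{(i)} - \hat u^{(i)}$, which satisfies $\delta^{(1)} = 0$ and $\|\delta^{(\xi+1)}\|_{\ell_{L}^{\infty}} = e_\xi$ (recall $s_{\mathrm{out}} = u^{(\xi+1)}$, $\hat s_{\mathrm{out}} = \hat u^{(\xi+1)}$). For a single layer $j$, combining the residual/LN steps \eqref{eq:dssm_residual}--\eqref{eq:dssm_LN}, the global Lipschitz property of $\mathrm{LN}_{\gamma_1^{(j)},\gamma_2^{(j)}}$ (with constant at most $\omega$), and the pointwise bound $\|\mathrm{Re}(\cdot)\|\le\|\cdot\|$ gives, for every $k$ and hence in $\ell_{L}^{\infty}$,
$\|\delta^{(j+1)}\|_{\ell_{L}^{\infty}} \le \omega\,\|z^{(j)} - \hat z^{(j)}\|_{\ell_{L}^{\infty}} \le \omega\bigl(\|\delta^{(j)}\|_{\ell_{L}^{\infty}} + \|y^{(j)} - \hat y^{(j)}\|_{\ell_{L}^{\infty}}\bigr)$.

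The heart of the argument is to split $y^{(j)} - \hat y^{(j)} = (y^{(j)} - \check y^{(j)}) + (\check y^{(j)} - \hat y^{(j)})$, where $\check y^{(j)}$ denotes the output of the \emph{unreduced} layer $\mathcal S^{(j)}$ driven by the \emph{reduced} input $\hat u^{(j)}$. The second term is a pure model-reduction error at a common input, so inequality \eqref{eq:single_h2_ineq} applied to the pair $(\mathcal S^{(j)},\hat{\mathcal S}^{(j)})$ with input $\hat u^{(j)}$ yields directly $\|\check y^{(j)} - \hat y^{(j)}\|_{\ell_{L}^{\infty}} \le \|\mathcal S^{(j)} - \hat{\mathcal S}^{(j)}\|_{h^2_L}\,\|\hat u^{(j)}\|_{\ell_{L}^{2}}\sqrt{1 + \|\hat u^{(j)}\|_{\ell_{L}^{2}}^{2}}$, which is precisely the factor attached to $\|\mathcal S^{(i)} - \hat{\mathcal S}^{(i)}\|_{h^2_L}$ in \eqref{eq:dssm_output_error}; note that at $j=1$ we have $\hat u^{(1)}=u^{(1)}$, so $\check y^{(1)}=y^{(1)}$ and the first term vanishes.

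For the first term, use the convolution representation \eqref{eq:conv_dssm} of $\mathcal S^{(j)}$ to write $y^{(j)}_k - \check y^{(j)}_k = (h_1^{(j)}\conv\delta^{(j)})_k + \sum_{t_1,t_2} h_2^{(j)}[t_1,t_2]\bigl(u^{(j)}_{k-t_1}\otimes u^{(j)}_{k-t_2} - \hat u^{(j)}_{k-t_1}\otimes\hat u^{(j)}_{k-t_2}\bigr)$. A pointwise-in-$k$ Cauchy--Schwarz over the convolution index gives $\|(h_1^{(j)}\conv\delta^{(j)})_k\|\le\|h_1^{(j)}\|_{\ell_{L}^{2}}\|\delta^{(j)}\|_{\ell_{L}^{2}}$; for the quadratic part, a \emph{joint} Cauchy--Schwarz over the pair $(t_1,t_2)$ isolates $\|h_2^{(j)}\|_{\ell_{L}^{2}}$, and the identity $a\otimes a - b\otimes b = (a-b)\otimes a + b\otimes(a-b)$ applied to the stacked histories $a=(u^{(j)}_k,\dots,u^{(j)}_0)$, $b=(\hat u^{(j)}_k,\dots,\hat u^{(j)}_0)$ — i.e.\ the vector-level content of Lemma~\ref{lem:conv_ineq} — bounds the remaining factor by $\|\delta^{(j)}\|_{\ell_{L}^{2}}\bigl(\|u^{(j)}\|_{\ell_{L}^{2}}+\|\hat u^{(j)}\|_{\ell_{L}^{2}}\bigr)$. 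Passing from $\ell_{L}^{2}$ to $\ell_{L}^{\infty}$ via $\|\delta^{(j)}\|_{\ell_{L}^{2}}\le\sqrt L\,\|\delta^{(j)}\|_{\ell_{L}^{\infty}}$ turns both estimates into $(g_j-1)\|\delta^{(j)}\|_{\ell_{L}^{\infty}}$, so that $\|\delta^{(j+1)}\|_{\ell_{L}^{\infty}}\le\omega\bigl(g_j\,\|\delta^{(j)}\|_{\ell_{L}^{\infty}} + \|\mathcal S^{(j)} - \hat{\mathcal S}^{(j)}\|_{h^2_L}\,\|\hat u^{(j)}\|_{\ell_{L}^{2}}\sqrt{1+\|\hat u^{(j)}\|_{\ell_{L}^{2}}^{2}}\bigr)$.

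It then remains to unroll this one-step recursion from $j=1$, where $\delta^{(1)}=0$ kills the initial propagation term; the model-reduction error created at layer $i$ picks up one factor $\omega$ at creation and a factor $\omega g_j$ each time it is pushed through layers $j=i+1,\dots,\xi$, producing exactly the coefficient $G_i=\omega^{\xi-i+1}\prod_{j=i+1}^{\xi} g_j$, and summing over $i=1,\dots,\xi$ gives \eqref{eq:dssm_output_error}. I expect the main obstacle to be the quadratic-kernel estimate: the doubly time-indexed kernel $h_2^{(j)}$ must be handled by a joint Cauchy--Schwarz in $(t_1,t_2)$ so that the genuine $h^2$-type quantity $\|h_2^{(j)}\|_{\ell_{L}^{2}}$ appears (rather than a cruder surrogate such as $\|M^{(j)}\|_{2}$ times powers of the state-kernel norm), and Lemma~\ref{lem:conv_ineq} must be applied to the full time-histories, not naively term by term; by comparison, the residual/LN bookkeeping and the $\sqrt L$ losses in the definition of $g_j$ are routine.
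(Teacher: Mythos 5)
Your proposal is correct and follows essentially the same route as the paper: the split through the auxiliary output $\check y^{(j)}$ (the full-order layer driven by $\hat u^{(j)}$) is exactly the paper's four-term kernel decomposition, with the reduction part bounded as in \eqref{eq:single_h2_ineq}, the propagation part handled by Young/Cauchy--Schwarz together with the content of Lemma~\ref{lem:conv_ineq} and the $\sqrt{L}$ norm conversion, and the same LN-Lipschitz recursion $e_{i}\le\omega\bigl(g_i e_{i-1}+\|\mathcal S^{(i)}-\hat{\mathcal S}^{(i)}\|_{h^2_L}\,\hat\beta_i\sqrt{1+\hat\beta_i^2}\bigr)$ unrolled from $e_0=0$. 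The only cosmetic difference is that you invoke the known single-system bound directly where the paper bounds the two kernel-error terms separately and then recombines them via $\tilde\alpha_{1,i}\hat\beta_i+\tilde\alpha_{2,i}\hat\beta_i^{2}\le\sqrt{\tilde\alpha_{1,i}^{2}+\tilde\alpha_{2,i}^{2}}\sqrt{1+\hat\beta_i^{2}}\,\hat\beta_i$.
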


\begin{proof}
First, for the two LQO systems $\mathcal{S}^{(i)}$ and $\hat{\mathcal{S}}^{(i)}$ at layer $i$, let $e_{\mathrm{SSM},i}:=\|y^{(i)}-\hat{y}^{(i)}\|_{\ell_{L}^{\infty}}$ denote the output error.
From~\eqref{eq:conv_dssm},
\begin{align}
  &y_k^{(i)} - \hat y_k^{(i)} \\
  &=\sum_{\ell=0}^{k} h_1^{(i)}[\ell]\cdot u_{k-\ell}^{(i)} - \hat h_1^{(i)}[\ell]\cdot \hat u_{k-\ell}^{(i)} \nonumber\\
  &\quad + \sum_{\ell_1=0}^{k}\sum_{\ell_2=0}^{k} h_2^{(i)}[\ell_1,\ell_2]\cdot\bigl(u_{k-\ell_1}^{(i)}\otimes u_{k-\ell_2}^{(i)}\bigr) \nonumber\\
  &\qquad\qquad - \hat h_2^{(i)}[\ell_1,\ell_2]\cdot\bigl(\hat u_{k-\ell_1}^{(i)}\otimes \hat u_{k-\ell_2}^{(i)}\bigr) \nonumber\\
  &=\sum_{\ell=0}^{k} h_1^{(i)}[\ell]\cdot\bigl(u_{k-\ell}^{(i)} - \hat u_{k-\ell}^{(i)}\bigr)
   + \bigl(h_1^{(i)}[\ell] - \hat h_1^{(i)}[\ell]\bigr)\cdot \hat u_{k-\ell}^{(i)} \nonumber\\
  &\quad + \sum_{\ell_1=0}^{k}\sum_{\ell_2=0}^{k} h_2^{(i)}[\ell_1,\ell_2]\Bigl(u_{k-\ell_1}^{(i)}\otimes u_{k-\ell_2}^{(i)} - \hat u_{k-\ell_1}^{(i)}\otimes \hat u_{k-\ell_2}^{(i)}\Bigr) \nonumber\\
  &\quad + \bigl(h_2^{(i)}[\ell_1,\ell_2]-\hat h_2^{(i)}[\ell_1,\ell_2]\bigr)\cdot\bigl(\hat u_{k-\ell_1}^{(i)}\otimes \hat u_{k-\ell_2}^{(i)}\bigr).
\end{align}
Hence, by the triangle inequality,
\begin{align}
  e_{\mathrm{SSM},i}
  &\le \|h_1^{(i)}\conv(u^{(i)}-\hat u^{(i)})\|_{\ell_{L}^{\infty}}
     + \|(h_1^{(i)}-\hat h_1^{(i)})\conv \hat u^{(i)}\|_{\ell_{L}^{\infty}} \nonumber\\
  &\quad+ \|h_2^{(i)}\conv \bigl(u^{(i)}\otimes u^{(i)} - \hat u^{(i)}\otimes \hat u^{(i)}\bigr)\|_{\ell_{L}^{\infty}} \nonumber\\
  &\quad+ \|(h_2^{(i)}-\hat h_2^{(i)})\conv (\hat u^{(i)}\otimes \hat u^{(i)})\|_{\ell_{L}^{\infty}}.
\end{align}
Let $\alpha_{1,i}:=\|h_1^{(i)}\|_{\ell_{L}^{2}}$, $\alpha_{2,i}:=\|h_2^{(i)}\|_{\ell_{L}^{2}}$, $\tilde \alpha_{1,i}:=\|h_1^{(i)}-\hat h_1^{(i)}\|_{\ell_{L}^{2}}$, $\tilde \alpha_{2,i}:=\|h_2^{(i)}-\hat h_2^{(i)}\|_{\ell_{L}^{2}}$,
$\beta_{i}:=\|u^{(i)}\|_{\ell_{L}^{2}}$, and $\hat\beta_{i}:=\|\hat u^{(i)}\|_{\ell_{L}^{2}}$.
By Young's convolution inequality and the relation $\|\cdot\|_{\ell_{L}^{2}}\leq \sqrt{L}\|\cdot\|_{\ell_{L}^{\infty}}$,
\begin{align}
    &\|h_1^{(i)}\conv(u^{(i)} - \hat u^{(i)})\|_{\ell_{L}^{\infty}}\leq \sqrt{L}\cdot \alpha_{1,i}\cdot e_{i-1} \\
    &\|(h_1^{(i)} - \hat h_1^{(i)})\conv \hat u^{(i)}\|_{\ell_{L}^{\infty}}\leq \tilde\alpha_{1,i}\cdot \hat\beta_{i}.
\end{align}
Also, we get
\begin{align}
    &\|h_2^{(i)}\conv (u^{(i)} \otimes u^{(i)} - \hat u^{(i)} \otimes\hat u^{(i)})\|_{\ell_{L}^{\infty}} \\
    &\leq \alpha_{2,i}\cdot \|u^{(i)} \otimes u^{(i)} - \hat u^{(i)} \otimes\hat u^{(i)}\|_{\ell_{L}^{2}} \\
    &\leq \sqrt{L}\cdot \alpha_{2,i}\cdot e_{i-1}\cdot (\beta_i + \hat\beta_i),
\end{align} 
and 
\begin{align}
    \|(h_2^{(i)} - \hat h_2^{(i)})\conv (\hat u^{(i)} \otimes\hat u^{(i)})\|_{\ell_{L}^{\infty}}
    &\leq \tilde \alpha_{2,i}\cdot \|\hat u^{(i)} \otimes \hat u^{(i)}\|_{\ell_{L}^{2}} \\
    &= \tilde \alpha_{2,i}\cdot \hat\beta_i^{2}.
\end{align}
Thus, with $\kappa_i:=\sqrt{L}\cdot (\alpha_{1,i}+\alpha_{2,i}(\beta_i+\hat\beta_i))$,
\begin{align}
  e_{\mathrm{SSM},i}
  &\le \kappa_i\,e_{i-1} + \tilde\alpha_{1,i}\,\hat\beta_i + \tilde\alpha_{2,i}\,\hat\beta_i^{2} \nonumber\\
  &\le \kappa_i\,e_{i-1}
   + \sqrt{\tilde \alpha_{1,i}^{2}+\tilde \alpha_{2,i}^{2}}\;\sqrt{1+\hat\beta_i^{2}}\;\hat\beta_i.
\end{align}
Taking the residual connection~\eqref{eq:dssm_residual} and LN~\eqref{eq:dssm_LN} into account, we obtain the recurrence for the output error $e_{i}$:
\begin{align}
  e_{i}
  &\le \operatorname{Lip}\bigl(\mathrm{LN}_{\gamma_1^{(i)},\gamma_2^{(i)}}\bigr)\,\|z^{(i)}-\hat z^{(i)}\|_{\ell_{L}^{\infty}} \nonumber\\
  &\le \operatorname{Lip}\bigl(\mathrm{LN}_{\gamma_1^{(i)},\gamma_2^{(i)}}\bigr)\,(e_{\mathrm{SSM},i}+e_{i-1}) \nonumber\\
  &\le \operatorname{Lip}\bigl(\mathrm{LN}_{\gamma_1^{(i)},\gamma_2^{(i)}}\bigr)\nonumber\\
  &\quad\cdot\Bigl\{(1+\kappa_i)\,e_{i-1}
  + \sqrt{\tilde \alpha_{1,i}^{2}+\tilde \alpha_{2,i}^{2}}\cdot\sqrt{1+\hat\beta_i^{2}}\cdot\hat\beta_i\Bigr\}. \;\;\;\label{eq:reccurence}
\end{align}
Using the relation $\|\mathcal{S}^{(i)}-\hat{\mathcal{S}}^{(i)}\|_{h^2_L}=(\tilde \alpha_{1,i}^{2}+\tilde \alpha_{2,i}^{2})^{1/2}$ and solving the recurrence~\eqref{eq:reccurence} with $e_0=0$, we obtain \eqref{eq:dssm_output_error}.
\end{proof}

In the Deep SSM considered here, the outputs of the state-space blocks are normalized by~\eqref{eq:LN}, so the inputs $\|u^{(j)}\|_{\ell_{L}^2}$ and $\|\hat{u}^{(j)}\|_{\ell_{L}^2}$ at layer $j$ can be regarded as bounded by a constant.
Thus, we have the following corollary.
\begin{corollary}\label{cor:constant_bound}
Let $b$ be a constant and assume $\max_{1\le j\le \xi}\bigl(\|u^{(j)}\|_{\ell_{L}^2},\,\|\hat{u}^{(j)}\|_{\ell_{L}^2}\bigr)\le b$.
Under the assumptions of Theorem~\ref{thm:error_analysis_general},
\begin{align}\label{eq:cascade_h2_error}
  e_{\xi}
  \;\le\; \bigl(b\sqrt{1+b^2}\bigr)\;\sum_{i=1}^{\xi}\tilde{G}_i\,
  \|\mathcal{S}^{(i)}-\hat{\mathcal{S}}^{(i)}\|_{h^2_L},
\end{align}
where 
\begin{align}
  \tilde{G}_i &= \omega^{\xi-i+1}\,\Bigl(\prod_{j=i+1}^{\xi}\tilde{g}_j\Bigr),\\
  \tilde{g}_j &= 1 + \sqrt{L}\left(\|h_1^{(j)}\|_{\ell_{L}^{2}} + 2b\,\|h_2^{(j)}\|_{\ell_{L}^{2}}\right).
\end{align}
\end{corollary}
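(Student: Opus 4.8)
The plan is to derive Corollary~\ref{cor:constant_bound} directly from the bound~\eqref{eq:dssm_output_error} of Theorem~\ref{thm:error_analysis_general} by a monotonicity argument: under the uniform bound $\max_{1\le j\le\xi}(\|u^{(j)}\|_{\ell_L^2},\|\hat u^{(j)}\|_{\ell_L^2})\le b$, every input-dependent quantity appearing in~\eqref{eq:dssm_output_error} is replaced by its value at the worst case $b$.

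First I would bound the input-dependent prefactor attached to each $\|\mathcal{S}^{(i)}-\hat{\mathcal{S}}^{(i)}\|_{h^2_L}$. The scalar function $\phi(t):=t\sqrt{1+t^2}$ is nondecreasing on $[0,\infty)$, since $\phi'(t)=\sqrt{1+t^2}+t^2/\sqrt{1+t^2}\ge0$ there; hence $\|\hat u^{(i)}\|_{\ell_L^2}\le b$ gives $\|\hat u^{(i)}\|_{\ell_L^2}\sqrt{1+\|\hat u^{(i)}\|_{\ell_L^2}^2}\le b\sqrt{1+b^2}$ for every $i$.

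Next I would handle the gain factors $G_i=\omega^{\xi-i+1}\prod_{j=i+1}^{\xi}g_j$. Each $g_j=1+\sqrt{L}\{\|h_1^{(j)}\|_{\ell_L^2}+\|h_2^{(j)}\|_{\ell_L^2}(\|u^{(j)}\|_{\ell_L^2}+\|\hat u^{(j)}\|_{\ell_L^2})\}$ is a nonnegative, nondecreasing function of the two input norms, so $\|u^{(j)}\|_{\ell_L^2}+\|\hat u^{(j)}\|_{\ell_L^2}\le 2b$ yields $g_j\le 1+\sqrt{L}(\|h_1^{(j)}\|_{\ell_L^2}+2b\,\|h_2^{(j)}\|_{\ell_L^2})=\tilde g_j$. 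Because $\omega$ and all the $g_j$ (hence the $\tilde g_j$) are nonnegative, products and powers preserve the inequality, so $G_i\le\omega^{\xi-i+1}\prod_{j=i+1}^{\xi}\tilde g_j=\tilde G_i$ for every $i$.

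Finally I would combine the two estimates: since each $\|\mathcal{S}^{(i)}-\hat{\mathcal{S}}^{(i)}\|_{h^2_L}\ge0$, substituting $G_i\le\tilde G_i$ and $\|\hat u^{(i)}\|_{\ell_L^2}\sqrt{1+\|\hat u^{(i)}\|_{\ell_L^2}^2}\le b\sqrt{1+b^2}$ into~\eqref{eq:dssm_output_error} and factoring the common constant $b\sqrt{1+b^2}$ out of the sum gives exactly~\eqref{eq:cascade_h2_error}. I do not expect any real obstacle here; the only steps needing a line of justification are the monotonicity of $\phi(t)=t\sqrt{1+t^2}$ and the fact that products/powers of nonnegative quantities respect the componentwise bounds, after which the result is a direct substitution.
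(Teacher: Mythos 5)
Your proposal is correct and follows essentially the same route as the paper, which simply invokes Theorem~\ref{thm:error_analysis_general} and bounds the input-dependent factors by their worst-case values at $b$; you have merely spelled out the monotonicity and nonnegativity details that the paper leaves implicit.
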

\begin{proof}
By Theorem~\ref{thm:error_analysis_general}, it can be shown.
\end{proof}
\tblack{Inequality~\eqref{eq:cascade_h2_error} shows that, for the same sufficiently small input $s_{\mathrm{in}}$, the final output error $e_{\xi}$ is bounded by a linear combination of the $h^2$ errors between the two LQO systems placed at each layer.
Here, the bound in Theorem~\ref{thm:error_analysis_general} and Corollary~\ref{cor:constant_bound} are worst-case upper bounds and can be conservative, especially because it contains products of Lipschitz constants $\operatorname{Lip}(\mathrm{LN}_{\gamma_1^{(i)},\gamma_2^{(i)}})$ and layerwise kernel norms. 
On the other hand, as shown in Section~\ref{sec:experiment} and Appendix~\ref{app:wikitext103}, by constructing ROMs so that the upper bound becomes small based on the insight obtained from the following lemma, one can construct a reduced Deep SSM with high performance.}

\begin{lemma}\label{lem:LN-Lip}
Let $m\ge 2$ and $\varepsilon>0$.
Denote by $\operatorname{Lip}(\mathrm{LN}_{\gamma_1,\gamma_2})$ 
the Lipschitz constant of $\mathrm{LN}$ in \eqref{eq:LN}. Then, for $\gamma_1,\gamma_2\in\mathbb{R}^m$,
\begin{align}
\frac{\|\gamma_1\|_{{\infty}}}{\sqrt{\varepsilon}}\sqrt{1-\frac{1}{m}}
\;\;\le\;\;
\operatorname{Lip}(\mathrm{LN}_{\gamma_1,\gamma_2})
\;\;\le\;\;
\frac{\|\gamma_1\|_{{\infty}}}{\sqrt{\varepsilon}}\,.
\end{align}
\end{lemma}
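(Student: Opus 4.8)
\textbf{Proof proposal for Lemma~\ref{lem:LN-Lip}.}

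The plan is to analyze the Jacobian of the map $z\mapsto\mathrm{LN}_{\gamma_1,\gamma_2}(z)$ and bound its spectral norm from above and below, since the Lipschitz constant of a $C^1$ map on $\mathbb{R}^m$ equals $\sup_z\|\mathrm{D}\,\mathrm{LN}(z)\|_2$. Write $\mathrm{LN}(z)=\mathrm{diag}(\gamma_1)\,\phi(z)+\gamma_2$ where $\phi(z):=(z-\mu\mathbf{1})/\sigma$ with $\mu=\tfrac1m\mathbf{1}^\top z$ and $\sigma=\sqrt{\tfrac1m\|z-\mu\mathbf{1}\|^2+\varepsilon}$. The additive constant $\gamma_2$ drops out of the derivative, and the chain rule gives $\mathrm{D}\,\mathrm{LN}(z)=\mathrm{diag}(\gamma_1)\,\mathrm{D}\phi(z)$, so $\|\mathrm{D}\,\mathrm{LN}(z)\|_2\le\|\gamma_1\|_\infty\,\|\mathrm{D}\phi(z)\|_2$ and, for the lower bound, one restricts attention to a well-chosen direction. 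Thus the crux is to compute $\mathrm{D}\phi(z)$ and bound its spectral norm.

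First I would compute $\mathrm{D}\phi(z)$ explicitly. Letting $P:=I-\tfrac1m\mathbf{1}\mathbf{1}^\top$ be the orthogonal projection onto $\mathbf{1}^\perp$ and $w:=Pz=z-\mu\mathbf{1}$, one has $\mu$ as a linear functional of $z$ with $\mathrm{D}\mu=\tfrac1m\mathbf{1}^\top$, and $\sigma^2=\tfrac1m\|w\|^2+\varepsilon$, so $\mathrm{D}\sigma=\tfrac{1}{m\sigma}w^\top P=\tfrac{1}{m\sigma}w^\top$ (using $Pw=w$). Then
\begin{align}
\mathrm{D}\phi(z)=\frac{1}{\sigma}P-\frac{1}{\sigma^2}w\,(\mathrm{D}\sigma)=\frac{1}{\sigma}\Bigl(P-\frac{1}{m\sigma^2}\,w w^\top\Bigr).
\end{align}
The matrix $P-\tfrac{1}{m\sigma^2}ww^\top$ is symmetric; on $\mathrm{span}(\mathbf{1})$ it is zero, on $\mathbf{1}^\perp\ominus\mathrm{span}(w)$ it acts as the identity, and on $\mathrm{span}(w)$ it has eigenvalue $1-\tfrac{\|w\|^2}{m\sigma^2}=1-\tfrac{\|w\|^2}{\|w\|^2+m\varepsilon}=\tfrac{m\varepsilon}{\|w\|^2+m\varepsilon}\in(0,1]$. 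Hence its spectral norm is exactly $1$ (achieved on $\mathbf{1}^\perp\ominus\mathrm{span}(w)$, which is nonempty since $m\ge2$), giving $\|\mathrm{D}\phi(z)\|_2=1/\sigma\le 1/\sqrt{\varepsilon}$ because $\sigma\ge\sqrt{\varepsilon}$. This yields the upper bound $\operatorname{Lip}(\mathrm{LN}_{\gamma_1,\gamma_2})\le\|\gamma_1\|_\infty/\sqrt{\varepsilon}$.

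For the lower bound I would exhibit a point and direction realizing a derivative of size at least $\tfrac{\|\gamma_1\|_\infty}{\sqrt{\varepsilon}}\sqrt{1-1/m}$. Take $z=0$, so $\mu=0$, $w=0$, $\sigma=\sqrt{\varepsilon}$, and $\mathrm{D}\phi(0)=\tfrac1{\sqrt{\varepsilon}}P$. Pick the index $j$ with $|\gamma_{1,j}|=\|\gamma_1\|_\infty$ and use the direction $v=e_j$ (the $j$-th coordinate vector); then $\mathrm{D}\,\mathrm{LN}(0)\,e_j=\tfrac1{\sqrt{\varepsilon}}\mathrm{diag}(\gamma_1)Pe_j$, and $\|Pe_j\|^2=1-1/m$, with the surviving entries weighted by $\gamma_1$. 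A clean way to get the stated bound is to take the inner product with $e_j$: $e_j^\top\mathrm{D}\,\mathrm{LN}(0)\,e_j=\tfrac1{\sqrt{\varepsilon}}\gamma_{1,j}(1-1/m)$, so $\|\mathrm{D}\,\mathrm{LN}(0)\|_2\ge\tfrac{|\gamma_{1,j}|}{\sqrt{\varepsilon}}(1-1/m)$; alternatively, and slightly more sharply matching the claim, lower-bound $\|\mathrm{diag}(\gamma_1)Pe_j\|$ directly — it is at least $|\gamma_{1,j}|\cdot\|Pe_j\|$ only after isolating the $j$-th component, which gives $|\gamma_{1,j}|\sqrt{1-1/m}$ when the off-diagonal contributions of $Pe_j$ are dropped. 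I expect the main obstacle is this lower-bound step: one must choose the test direction carefully so that the $\|\gamma_1\|_\infty$ factor is genuinely extracted (the entries of $Pe_j$ other than the $j$-th are $-1/m$ and are multiplied by the \emph{other}, possibly small, entries of $\gamma_1$), and then verify that $\|\mathrm{diag}(\gamma_1)Pe_j\|\ge\|\gamma_1\|_\infty\sqrt{1-1/m}$ holds — which follows because $\|\mathrm{diag}(\gamma_1)Pe_j\|\ge|\gamma_{1,j}|\,\bigl\|Pe_j - (\text{projection onto }e_j^\perp\text{-part})\bigr\|$ is not quite it; the honest bound is obtained by noting $P e_j = (1-\tfrac1m)e_j - \tfrac1m\sum_{k\ne j}e_k$ and hence $\|\mathrm{diag}(\gamma_1)Pe_j\|\ge(1-\tfrac1m)|\gamma_{1,j}| = \|\gamma_1\|_\infty(1-\tfrac1m)$, which already gives a valid (if marginally weaker-looking) lower bound; reconciling $(1-1/m)$ versus $\sqrt{1-1/m}$ requires instead computing $\|\mathrm{D}\phi(0)\,e_j\|=\tfrac1{\sqrt\varepsilon}\|Pe_j\|=\tfrac1{\sqrt\varepsilon}\sqrt{1-1/m}$ and observing that after applying $\mathrm{diag}(\gamma_1)$ the component along the dominant coordinate is preserved with weight $\|\gamma_1\|_\infty$, so that a suitable test functional recovers $\tfrac{\|\gamma_1\|_\infty}{\sqrt\varepsilon}\sqrt{1-1/m}$. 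I would present this carefully to land exactly on the claimed constant.
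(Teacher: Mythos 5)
Your upper bound is correct and is essentially the paper's own argument: you compute the Jacobian $\tfrac{1}{\sigma}\bigl(P-\tfrac{ww^\top}{m\sigma^2}\bigr)$ (the paper writes the same operator as $D\bigl(\tfrac{I}{\sigma}-\tfrac{cc^\top}{\sigma^{3}m}\bigr)P$ with $D=\mathrm{diag}(\gamma_1)$), bound its spectral norm by $1/\sigma\le 1/\sqrt{\varepsilon}$, and multiply by $\|D\|_2=\|\gamma_1\|_{\infty}$. (Minor point: your claim that the norm of $P-\tfrac{ww^\top}{m\sigma^2}$ is \emph{exactly} $1$ fails when $m=2$ and $w\neq 0$, since then $\mathbf{1}^\perp\ominus\mathrm{span}(w)=\{0\}$; but only $\le 1$ is needed, so this does not affect the bound.)

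The genuine gap is in the lower bound. With the test direction $e_j$ you only get $\|D P e_j\|\ge |\gamma_{1,j}|\,|(Pe_j)_j|=\|\gamma_1\|_{\infty}\bigl(1-\tfrac1m\bigr)$, which is strictly weaker than the claimed $\|\gamma_1\|_{\infty}\sqrt{1-\tfrac1m}$, and your closing suggestion that ``after applying $\mathrm{diag}(\gamma_1)$ the component along the dominant coordinate is preserved with weight $\|\gamma_1\|_{\infty}$'' is not correct: the $j$-th coordinate of $Pe_j$ equals $1-\tfrac1m$, and the rest of the mass of $\|Pe_j\|=\sqrt{1-1/m}$ lies in coordinates that get multiplied by the other, possibly tiny, entries of $\gamma_1$. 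Indeed, for $\gamma_1=e_j$ one has $\|DPe_j\|=1-\tfrac1m$ exactly, so no test functional applied to $DPe_j$ can recover the factor $\sqrt{1-1/m}$; the direction, not the functional, must be changed. The fix (which is what the paper's one-line lower bound implicitly uses) is to take $x:=Pe_j/\|Pe_j\|$, where $j$ attains $|\gamma_{1,j}|=\|\gamma_1\|_{\infty}$. Then $\|x\|=1$, $Px=x$, and $x_j=\|Pe_j\|=\sqrt{1-1/m}$, so at any point with $c=0$ (where $\nabla_z\mathrm{LN}=\tfrac{1}{\sqrt{\varepsilon}}DP$) one gets $\|DPx\|=\|Dx\|\ge|\gamma_{1,j}|\,x_j=\|\gamma_1\|_{\infty}\sqrt{1-1/m}$, hence $\operatorname{Lip}(\mathrm{LN}_{\gamma_1,\gamma_2})\ge\tfrac{1}{\sqrt{\varepsilon}}\max_{\|x\|=1}\|DPx\|\ge\tfrac{\|\gamma_1\|_{\infty}}{\sqrt{\varepsilon}}\sqrt{1-\tfrac1m}$, exactly the claimed constant. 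With this replacement your argument becomes a complete proof along the same lines as the paper's.
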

\begin{proof}
See Appendix~\ref{app:proofs}.
\end{proof}

Lemma~\ref{lem:LN-Lip} provides important insights regarding the contribution coefficient $\tilde G_i$.
By Lemma~\ref{lem:LN-Lip}, taking $\varepsilon$ for~\eqref{eq:LN} sufficiently small yields $\operatorname{Lip}(\mathrm{LN}_{\gamma_1,\gamma_2})\gg 1$.
Moreover, $\tilde{g}_{j}\ge 1$ for any $j$.
Therefore, by Corollary~\ref{cor:constant_bound}, for any $i$, we obtain $\tilde{G}_{i}\;\ge\; \tilde{G}_{i+1}$.
This means that $h^2_{L}$-norm errors in shallower layers have a stronger effect on the final intermediate output error of the Deep SSM.


\section{Model reduction for Deep SSM}\label{sec:proposed_mor}
\subsection{Model Order Reduction Problem}
Based on Corollary~\ref{cor:constant_bound}, we formulate the MOR problem.
In particular, we consider the optimization problem that minimizes the upper bound in~\eqref{eq:cascade_h2_error}.

Let $r_{\mathrm{list}}:=[\,r_1,r_2,\ldots,r_{\xi}\,]\in \mathbb{N}^{\xi}$.
For each $i=1,\ldots,\xi$, let
$\hat U^{(i)} := [\,\mathrm{vec}(\hat U_1^{(i)}),\ldots,\mathrm{vec}(\hat U_m^{(i)})\,]^{\top} \in\mathbb{C}^{m\times c r_i}$, and define
$\mathcal{C}_i:=\{ \hat\Lambda^{(i)}\in\mathbb{C}^{r_i}\mid |\hat\Lambda^{(i)}_j|<1,\ j=1,\dots,r_i\}\times \mathbb{C}^{r_i\times m} \times \mathbb{C}^{m\times r_i} \times \mathbb{C}^{m\times c r_i}$.
Let $\hat{\mathbf{S}} := (\hat{\mathcal{S}}^{(i)})_{i=1}^{\xi}$.
The corresponding parameter representation is given by $\hat\theta:=\{(\hat{\Lambda}^{(i)},\hat B^{(i)},\hat C^{(i)},\hat U^{(i)})\}_{i=1}^{\xi}$.

Then, for a given $r_{\mathrm{list}}$, the MOR problem can be written as
\begin{equation}\label{eq:opt_original}
\begin{aligned}
  \text{minimize}\quad
  & f(\hat{\mathbf{S}}):=\sum_{i=1}^{\xi}\,\tilde{G}_i\,\bigl\|\,\mathcal{S}^{(i)}-\hat{\mathcal{S}}^{(i)}\,\bigr\|_{h^2_L} \\
  \text{subject to}\quad
  & (\hat{\Lambda}^{(i)}, \hat{B}^{(i)}, \hat{C}^{(i)}, \hat U^{(i)}) \in \mathcal{C}_i,\quad \forall i=1,\dots,\xi.
\end{aligned}
\end{equation}
Here, $\mathcal{S}^{(i)}$ and $\hat{\mathcal{S}}^{(i)}$ denote, respectively for the $i$-th layer of the Deep SSM, the systems~\eqref{eq:d-LQO} with $A:=\mathrm{diag}(\Lambda)$ and~\eqref{eq:d-r-LQO} with $\hat{A}:=\mathrm{diag}(\hat{\Lambda})$.

\tblack{
The design of $r_{\mathrm{list}}$ is important when solving the MOR problem.
Generally, using a larger $r$ in MOR can improve the approximation accuracy of the ROM by retaining more dynamical information of the original system.
Therefore, to minimize the objective function of~\eqref{eq:opt_original}, the $r_{\mathrm{list}}$ should be set in descending order.
We show in Section~\ref{sec:experiment} that this design principle is also effective for model compression.}

When performing optimization, since $\tilde{G}_i$ is given independently of the optimization variables, it suffices to minimize $\bigl\|\,\mathcal{S}^{(i)}-\hat{\mathcal{S}}^{(i)}\,\bigr\|_{h^2_L}$, which is the same setting as the existing method described in Fig.~\ref{fig:comparison}. 
Note, as discussed in~\cite{sakamoto2025compression}, that because the internal linear SSMs in the Deep SSM are (i) defined on a finite horizon, (ii) required to be stability-guaranteed for reliable training, and (iii) formulated over the complex field, recent $\mathcal{H}^2$ MOR theory for LQO systems~\cite{reiter2025h, zulfiqar2024time} is not directly applicable in some cases.

\begin{remark}\label{rem:exi_difficult}
  In model reduction for Deep SSMs, it is preferable to optimize the objective $f$ in \eqref{eq:opt_original} rather than the aggregate output error $e_{\xi}$ defined in~\eqref{eq:exi}. 
  The quantity $e_{\xi}$ is difficult to optimize directly because it is nonsmooth and lacks a closed-form characterization, and it involves multi-fold products of layerwise Volterra kernels $h_1, h_2$. 
  In contrast, the objective $f$ admits a closed-form expression and is differentialble as a real-valued function of the real and imaginary parts of the complex variables, except at points where the corresponding $h^{2}_{L}$ error is exactly zero.
  Moreover, it enables the construction of ROMs that preserve Deep SSM-specific properties as in~\eqref{eq:opt_original}. 
  Motivated by these considerations, our goal is not to minimize \(e_{\xi}\) directly, but to solve the optimization problem~\eqref{eq:opt_original} so as to obtain reduced parameters that decrease $e_{\xi}$.
\end{remark}

\begin{remark}
  \tblack{The design principle, namely assigning larger reduced dimensions to shallow layers, is also useful for MOR-based compression for Deep SSM architectures~\cite{gu2022efficiently,gu2022parameterization, smith2023simplified, gu2023mamba} with general nonlinear functions such as GeLU. This is because, when the nonlinear functions are Lipschitz continuous, an output error bound can be derived in the same manner as Theorem~\ref{thm:error_analysis_general}. 
  However, for such general models, the corresponding model reduction problem may be computationally expensive.}
\end{remark}

\subsection{\texorpdfstring{Gradients for \eqref{eq:opt_original}}{Gradients}}\label{subsec:gradients}
We first derive the gradients for~\eqref{eq:opt_original} following~\cite{zulfiqar2024time}.
\tblack{The following lemma is a discrete-time and complex-valued counterpart of the gradient formulas derived in~\cite{zulfiqar2024time}, and can be obtained by following the same derivation.}

\begin{lemma}[\!\!\!\cite{zulfiqar2024time}]
Let the time-limited $h^2$ error between the complex LQO system~\eqref{eq:d-LQO} and its reduced system~\eqref{eq:d-r-LQO} be
$\varphi(\hat{\mathcal S})\;:=\;\|\mathcal S-\hat{\mathcal S}\|^2_{h^2_L}$.
Then the gradients of $\varphi(\hat{\mathcal S})$ are given by
\begin{align}
\nabla_{\hat A}\varphi
&=2\bigl(-(\tilde Y_L+2\tilde Z_L)^\ast A\tilde P_L+(\hat Y_L+2\hat Z_L)\hat A \hat P_L+L_L\bigr)\ , \\
\ \nabla_{\hat B}\varphi
&=2\bigl(-(\tilde Y_L+2\tilde Z_L)^\ast B+(\hat Y_L+2\hat Z_L)\hat B\bigr)\, , \\
\ \nabla_{\hat C}\varphi
&=2\bigl(-C\,\tilde P_L+\hat C\,\hat P_L\bigr)\, , \\
\nabla_{\hat M_i}\varphi
&=2\bigl(-\tilde P_L^\ast M_i\,\tilde P_L+\hat P_L\,\hat M_i\,\hat P_L\bigr),\quad i=1,\dots,p\ ,
\end{align}
where $\tilde P_L$ and $\hat P_L$ are the solutions to~\eqref{eq:Ptilde_t} and~\eqref{eq:Phat_t}, respectively.
Let $S_{L}:=A^{L}$ and $\hat{S}_{L}:=\hat{A}^{L}$.
Then $\tilde Y_L,\hat Y_L,\tilde Z_L,\hat Z_L$ are the solutions to the finite-horizon Lyapunov/Sylvester equations
\begin{align}
\tilde{Y}_L &= A^\ast \tilde{Y}_L \hat{A}+C^\ast \hat{C}-S^\ast_{L}C^\ast \hat{C}\hat{S}_{L},\label{eq:Ytil_t}\\
\hat Y_L &= \hat A^\ast\hat Y_L\hat A+\hat C^\ast\hat C-\hat{S}^{\ast}_{L}\hat C^\ast\hat C\hat{S}_{L},\label{eq:Yhat_t}\\
\tilde Z_L &= A^\ast \tilde Z_L\hat A+\sum_{i=1}^p\bigl(M_i\tilde P_L\hat M_i-S^\ast_{L}M_i\tilde P_L\hat M_i\hat S_{L}\bigr), \label{eq:Ztil_t}\\
\hat Z_L &= \hat A^\ast\hat Z_L\hat A+\sum_{i=1}^p\bigl(\hat M_i\hat P_L\hat M_i-\hat S^\ast_{L}\hat M_i\hat P_L\hat M_i\hat S_{L}\bigr).\label{eq:Zhat_t}
\end{align}
Moreover, with $\mathcal T^{\ast}_{\hat A,L}(X):=\sum_{j=0}^{L-1}(\hat A^\ast)^{j}\,X\,(\hat A^\ast)^{L-1-j}$ and
\begin{align}
    \bar Z_L &= A^\ast\bar Z_L\hat A+\sum_{i=1}^p M_i\tilde P_L\hat M_i, \label{eq:Zbar_t}\\
    \bar Z_{r,L} &= \hat A^\ast\bar Z_{r,L}\hat A+\sum_{i=1}^p \hat M_i\hat P_L\hat M_i, \label{eq:Zbar_r_t}\\
    \tilde P &= A\tilde P\hat A^\ast+B\hat B^\ast, \label{eq:Ptil_inf}\\
    \hat P &= \hat A\hat P\hat A^\ast+\hat B\hat B^\ast ,\label{eq:Phat_inf}
\end{align}
we define
\begin{align}
&V_L:=\hat{B}B^\ast S_L^\ast \bar Z_L-\hat B\hat B^\ast \hat S_L^\ast \bar Z_{r,L}
+\tilde P^\ast S_L^\ast C^\ast \hat C-\hat P\,\hat S_L^\ast \hat C^\ast \hat C \nonumber\\
&\qquad\quad+\sum_{i=1}^p\bigl(\tilde P^\ast S_L^\ast M_i\,\tilde P_L\hat M_i-\hat P\,\hat S_L^\ast \hat M_i\,\hat P_L\hat M_i\bigr),\label{eq:V_t}\\
&L_L:=-(\tilde Y_L+\tilde Z_L)^\ast(\tilde P-\tilde P_L)+(\hat Y_L+\hat Z_L)(\hat P-\hat P_L) \nonumber\\
&\qquad\quad -(\bar Z_L-\tilde Z_L)^\ast \tilde P_L +(\bar Z_{r,L}-\hat Z_L)\hat P_L+\mathcal T^{\ast}_{\hat A,L}(V_L).\label{eq:Z_t}
\end{align}
\end{lemma}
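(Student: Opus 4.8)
The plan is to derive the gradient formulas by differentiating the explicit expression~\eqref{eq:single_h2_error} for $\varphi(\hat{\mathcal S})=\|\mathcal S-\hat{\mathcal S}\|_{h^2_L}^2$, treating $\hat A,\hat B,\hat C,\hat M_i$ as independent (complex) variables and using the Wirtinger-type differential calculus established in the Notation section, where $\mathrm{d}\varphi=\sum_j\langle\nabla_{\theta_j}\varphi,\mathrm{d}\theta_j\rangle$ with $\langle X,Y\rangle=\mathrm{Re}(\mathrm{tr}(X^\ast Y))$. First I would record that $\varphi$ depends on $\hat A,\hat B,\hat C,\hat M_i$ both directly and through the auxiliary matrices $\hat P_L$ and $\tilde P_L$ that solve the finite-horizon Sylvester/Lyapunov equations~\eqref{eq:Ptilde_t}--\eqref{eq:Phat_t}. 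The main device is the adjoint (Lagrangian) trick: to handle the implicit dependence of $\mathrm{tr}$-terms on $\hat P_L,\tilde P_L$, I introduce the dual variables $\hat Y_L,\tilde Y_L$ (for the linear part, trace of $C P_L C^\ast$ etc.) and $\hat Z_L,\tilde Z_L$ (for the quadratic part, $\sum_i\mathrm{tr}(P_L M_i P_L M_i)$ etc.), each defined as the solution of the adjoint finite-horizon equation~\eqref{eq:Ytil_t}--\eqref{eq:Zhat_t}; these are precisely the equations one obtains by matching the sensitivity of $\varphi$ with respect to perturbations $\mathrm{d}\hat P_L$ and $\mathrm{d}\tilde P_L$ against the constraint differentials.

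Concretely, the key steps in order are: (1) Differentiate~\eqref{eq:single_h2_error} directly in $\hat C$ and each $\hat M_i$; since these appear only explicitly (not inside $\hat P_L,\tilde P_L$), a one-line trace manipulation gives $\nabla_{\hat C}\varphi=2(C\tilde P_L+\hat C\hat P_L)$ and $\nabla_{\hat M_i}\varphi=2(-\tilde P_L^\ast M_i\tilde P_L+\hat P_L\hat M_i\hat P_L)$. (2) Differentiate in $\hat A$ and $\hat B$: now one must also account for $\mathrm{d}\hat P_L$ and $\mathrm{d}\tilde P_L$, which are themselves governed by the differentiated Sylvester equations obtained from~\eqref{eq:Ptilde_t}--\eqref{eq:Phat_t}. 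I would collect the terms proportional to $\mathrm{d}\hat P_L$ and $\mathrm{d}\tilde P_L$, observe that their coefficients are exactly $\hat Y_L+2\hat Z_L$ and $\tilde Y_L+2\tilde Z_L$ (the factor $2$ coming from the two appearances of $P_L$ in each quadratic trace), and use the adjoint identity $\mathrm{tr}(Y^\ast\,\mathrm{d}P)=\mathrm{tr}(\text{(explicit perturbation terms)})$ to push the implicit sensitivity onto explicit $\mathrm{d}\hat A,\mathrm{d}\hat B$ pieces. (3) Handle the boundary terms $A^L B\hat B^\ast(\hat A^\ast)^L$, $\hat A^L\hat B\hat B^\ast(\hat A^\ast)^L$ appearing in the finite-horizon equations: differentiating $(\hat A^\ast)^L$ produces a telescoping sum $\sum_{j=0}^{L-1}(\hat A^\ast)^j\,\mathrm{d}\hat A^\ast\,(\hat A^\ast)^{L-1-j}$, which is precisely the operator $\mathcal T_{\hat A,L}^\ast$; tracking where these land yields the correction matrices $\bar Z_L,\bar Z_{r,L},\tilde P,\hat P$ (the infinite-horizon companions in~\eqref{eq:Zbar_t}--\eqref{eq:Phat_inf}) and then $V_L$ and $L_L$ as defined in~\eqref{eq:V_t}--\eqref{eq:Z_t}. (4) Assemble everything, read off the coefficient of $\mathrm{d}\hat A$ and $\mathrm{d}\hat B$, and identify it with $\nabla_{\hat A}\varphi$ and $\nabla_{\hat B}\varphi$.

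The main obstacle I anticipate is the bookkeeping in step~(3): the finite-horizon (time-limited) setting, unlike the classical infinite-horizon $\mathcal H^2$ case, introduces the extra boundary terms $-A^L BB^\ast(A^\ast)^L$ and their mixed analogues, and differentiating the $\hat A^L$ factors forces one to carry along both the "truncated" solutions $\hat P_L,\tilde P_L,\hat Z_L,\tilde Z_L$ and their "full" counterparts $\hat P,\tilde P,\bar Z_{r,L},\bar Z_L$, plus the telescoping operator $\mathcal T_{\hat A,L}^\ast$. Getting the signs, the transposes/conjugate-transposes, and the exact grouping into $V_L$ and $L_L$ correct is delicate, and this is where following the structure of~\cite{zulfiqar2024time} carefully pays off. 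The linear-part contributions ($\hat Y_L,\tilde Y_L$) are a direct analogue of known time-limited Gramian-based gradient formulas, and the quadratic-part contributions ($\hat Z_L,\tilde Z_L$, with the factor $2$) mirror the LQO structure of~\cite{reiter2024h2}; the novelty and the care needed lie entirely in marrying these two with the finite-horizon boundary corrections. Once the differential $\mathrm{d}\varphi$ is written as a sum of real parts of traces against $\mathrm{d}\hat A,\mathrm{d}\hat B,\mathrm{d}\hat C,\mathrm{d}\hat M_i$, the stated gradients follow immediately by the definition of $\nabla_{\theta_j}\varphi$.
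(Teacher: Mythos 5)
Your proposal is correct in approach and essentially coincides with the paper's proof, which is simply deferred to~\cite{zulfiqar2024time}: the adjoint-variable derivation you outline (direct differentiation for $\hat C$ and $\hat M_i$; dual matrices $\tilde Y_L,\hat Y_L,\tilde Z_L,\hat Z_L$ with the factor $2$ for the implicit dependence through $\tilde P_L,\hat P_L$; telescoped differentiation of $\hat A^{L}$ producing $\mathcal T^{\ast}_{\hat A,L}$, the auxiliary Gramians $\tilde P,\hat P,\bar Z_L,\bar Z_{r,L}$, and hence $V_L$ and $L_L$) is exactly the computation carried out in that reference, adapted here to the complex setting via the $\mathrm{Re}\,\mathrm{tr}(\cdot^{\ast}\cdot)$ inner product. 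The only caveat is that your sketch asserts rather than verifies the final sign and conjugate-transpose bookkeeping (e.g.\ in the $\hat C$- and $\hat A$-gradients), which is precisely where the detailed work of~\cite{zulfiqar2024time} resides.
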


\begin{theorem}\label{thm:gradients}
Let $\varphi^{(i)}(\hat{\mathcal S}^{(i)})\;:=\;\|\mathcal S^{(i)}-\hat{\mathcal S}^{(i)}\|^2_{h^2_L}$ and $K_i:=2\sqrt{\varphi^{(i)}(\hat{\mathcal S}^{(i)})}$.
Then the gradients of the MOR objective~\eqref{eq:opt_original} are
\begin{align}
\nabla_{\hat \Lambda^{(i)}}f
&=\frac{\tilde{G}_i}{K_i}\cdot\mathrm{diag}\!\left(\nabla_{\hat A^{(i)}}\varphi^{(i)}\right), \\
\nabla_{\hat B^{(i)}}f
&=\frac{\tilde{G}_i}{K_i}\cdot\nabla_{\hat B^{(i)}}\varphi^{(i)}, \qquad \nabla_{\hat C^{(i)}}f
=\frac{\tilde{G}_i}{K_i}\cdot\nabla_{\hat C^{(i)}}\varphi^{(i)}, \\
\nabla_{\hat U_j^{(i)}} f
&=
\hat U_j^{(i)}
\left(
\nabla_{\hat M_j^{(i)}}f+
(\nabla_{\hat M_j^{(i)}}f)^\ast
\right), \qquad j=1,\dots,m,
\end{align}
where $\nabla_{{\hat M_{j}^{(i)}}}f
=\frac{\tilde{G}_i}{K_i}\cdot\nabla_{{\hat M_{j}^{(i)}}}\varphi^{(i)}$.
\end{theorem}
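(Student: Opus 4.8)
The plan is to reduce Theorem~\ref{thm:gradients} to the already-established single-system gradient formulas for $\varphi^{(i)}(\hat{\mathcal S}^{(i)})=\|\mathcal S^{(i)}-\hat{\mathcal S}^{(i)}\|_{h^2_L}^2$ via the chain rule. Since the objective in~\eqref{eq:opt_original} is a sum $f(\hat{\mathbf S})=\sum_{i=1}^{\xi}\tilde G_i\,\|\mathcal S^{(i)}-\hat{\mathcal S}^{(i)}\|_{h^2_L}=\sum_{i=1}^{\xi}\tilde G_i\,\sqrt{\varphi^{(i)}(\hat{\mathcal S}^{(i)})}$, and the variables $(\hat\Lambda^{(i)},\hat B^{(i)},\hat C^{(i)},\hat U^{(i)})$ appearing in the $i$-th summand are disjoint from those in the other summands (each layer has its own reduced parameters), the differential decouples layerwise: for a perturbation $\mathrm d\hat{\mathcal S}^{(i)}$ of the $i$-th layer only, $\mathrm df=\tilde G_i\,\mathrm d\!\left(\sqrt{\varphi^{(i)}}\right)$. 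So the entire argument comes down to differentiating $\sqrt{\varphi^{(i)}}$ for a single layer.

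First I would apply the scalar chain rule $\mathrm d\sqrt{\varphi^{(i)}}=\frac{1}{2\sqrt{\varphi^{(i)}}}\,\mathrm d\varphi^{(i)}$, which already explains the factor $K_i=2\sqrt{\varphi^{(i)}(\hat{\mathcal S}^{(i)})}$ in the denominators (this requires $\varphi^{(i)}>0$, i.e.\ the reduced model is not an exact realization; otherwise the gradient is interpreted via the limit/subgradient, a point worth a one-line remark). Next, for the variables $\hat B^{(i)}$, $\hat C^{(i)}$, $\hat M_j^{(i)}$, the preceding Lemma gives $\mathrm d\varphi^{(i)}=\langle\nabla_{\hat B^{(i)}}\varphi^{(i)},\mathrm d\hat B^{(i)}\rangle+\cdots$ directly in terms of the solutions $\tilde P_L,\hat P_L,\tilde Y_L,\hat Y_L,\tilde Z_L,\hat Z_L,L_L$ of the finite-horizon Lyapunov/Sylvester equations, so combining with the chain rule yields $\nabla_{(\cdot)}f=\frac{\tilde G_i}{K_i}\nabla_{(\cdot)}\varphi^{(i)}$ verbatim. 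The only genuinely new piece is the $\hat\Lambda^{(i)}$ gradient: the Lemma is stated for a general $\hat A$, but~\eqref{eq:opt_original} constrains $\hat A^{(i)}=\mathrm{diag}(\hat\Lambda^{(i)})$, so I would compose the map $\hat\Lambda^{(i)}\mapsto\mathrm{diag}(\hat\Lambda^{(i)})=\hat A^{(i)}$ with $\hat A^{(i)}\mapsto\varphi^{(i)}$. Since $\mathrm{diag}(\cdot)$ is linear and its adjoint with respect to the inner product $\langle X,Y\rangle=\mathrm{Re}(\mathrm{tr}(X^*Y))$ extracts the diagonal of a matrix, the chain rule gives $\nabla_{\hat\Lambda^{(i)}}\varphi^{(i)}=\mathrm{diag}\!\left(\nabla_{\hat A^{(i)}}\varphi^{(i)}\right)$ (viewing $\mathrm{diag}(M)$ here as the vector of diagonal entries), hence $\nabla_{\hat\Lambda^{(i)}}f=\frac{\tilde G_i}{K_i}\mathrm{diag}\!\left(\nabla_{\hat A^{(i)}}\varphi^{(i)}\right)$.

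The main obstacle — really the only subtlety — is bookkeeping around the constraint set $\mathcal C_i$ and the parametrization: I must be careful that differentiating through $\hat A^{(i)}=\mathrm{diag}(\hat\Lambda^{(i)})$ produces the diagonal-extraction operator correctly (including the complex-conjugate convention built into $\langle\cdot,\cdot\rangle$), and that the reparametrization $\hat M_j^{(i)}=(\hat U_j^{(i)})^*\hat U_j^{(i)}$ versus treating $\hat M_j^{(i)}$ as the free Hermitian variable is consistent with how the Lemma's gradient $\nabla_{\hat M_j^{(i)}}\varphi^{(i)}$ is defined — here the statement of Theorem~\ref{thm:gradients} still reports the gradient with respect to $\hat M_j^{(i)}$ (not $\hat U_j^{(i)}$), so no further chain rule is needed at this level and I can simply quote the Lemma. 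Apart from these conventions, the proof is a short, mechanical application of linearity of the differential over the layerwise sum plus the scalar square-root chain rule, so I would present it in a few lines: state $\mathrm df=\sum_i\tilde G_i\,\mathrm d\sqrt{\varphi^{(i)}}$, expand each term by the chain rule and the previous Lemma, read off the components, and handle $\hat\Lambda^{(i)}$ via the $\mathrm{diag}$-adjoint remark.
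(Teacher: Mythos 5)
Your proposal is correct and follows essentially the same route as the paper: layerwise decoupling of the sum, the scalar chain rule $\mathrm{d}\sqrt{\varphi^{(i)}}=\mathrm{d}\varphi^{(i)}/K_i$ combined with the preceding Lemma's gradients for $\hat B^{(i)},\hat C^{(i)},\hat M_j^{(i)}$, and the adjoint of the linear map $\hat\Lambda^{(i)}\mapsto\mathrm{diag}(\hat\Lambda^{(i)})$ to get the diagonal-extraction formula for $\nabla_{\hat\Lambda^{(i)}}f$. Your added caveat that $\varphi^{(i)}>0$ is needed for differentiability of the square root is a sensible refinement the paper leaves implicit.
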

\begin{proof}
\tblack{
Let $k$ be the total degrees of freedom. 
Under the identification $\mathbb{C}^k\simeq\mathbb{R}^{2k}$, we can use the standard real chain rule.}
By the chain rule,
$\mathrm{d}f=\sum_i \frac{\tilde G_i}{K_i}\,\mathrm{d}\varphi^{(i)}
=\sum_i\big\langle \tfrac{\tilde G_i}{K_i}\nabla_{\theta}\varphi^{(i)},\,\mathrm{d}\theta\big\rangle$ with $\theta\in\{\hat B^{(i)},\hat C^{(i)},\hat M^{(i)}_j\}$.
Moreover, since $\hat A^{(i)}=\mathrm{diag}(\hat\Lambda^{(i)})$, $\mathrm{d}\varphi^{(i)}=\langle\nabla_{\hat A^{(i)}}\varphi^{(i)},\mathrm{d}\hat A^{(i)}\rangle
=\langle\mathrm{diag}(\nabla_{\hat A^{(i)}}\varphi^{(i)}),\mathrm{d}\hat\Lambda^{(i)}\rangle$, which yields the stated expressions.
Finally, since $\hat M_j^{(i)}=(\hat U_j^{(i)})^\ast\hat U_j^{(i)}$, we have $\mathrm{d}\hat M_j^{(i)}=(\mathrm{d}\hat U_j^{(i)})^\ast\hat U_j^{(i)}+(\hat U_j^{(i)})^\ast\mathrm{d}\hat U_j^{(i)}$, and thus the real chain rule yields $\nabla_{\hat U_j^{(i)}} f$.
\end{proof}

\begin{remark}
\tblack{The gradient formulas in Theorem~\ref{thm:gradients} can be viewed as discrete-time, complex-valued counterparts of those in~\cite{zulfiqar2024time}, with the additional diagonal structure imposed on the reduced matrices.
Since the objective function $\psi$ is a real-valued function of complex variables, it is not complex analytic in general. Therefore, the gradients in Theorem~\ref{thm:gradients} should be understood as real Euclidean gradients with respect to the real and imaginary parts of the complex variables, written in complex matrix form. Equivalently, for a real-valued function $\psi$ of a complex variable $\widehat\theta$, Wirtinger calculus gives $\nabla_{\widehat\theta}\psi=2\frac{\partial \psi}{\partial \overline{\widehat\theta}}$, where $\overline{\widehat\theta}$ denotes the complex conjugate of $\widehat\theta$. Thus, the factor $2$ can be absorbed into the step size, and the resulting gradient-based algorithm is essentially unchanged.}
\end{remark}

\subsection{Gradient-Based Algorithm}\label{subsec:algorithm}
Alg.~\ref{alg:disc_cascade_tlh2_lqo_grad_short} is a stability-guaranteeing gradient-based algorithm derived from Theorem~\ref{thm:gradients}.
At each iteration, after computing the gradients, backtracking is employed so that the proposed reduced model satisfies the stability constraint and the Armijo condition.
\tblack{Here, the Armijo condition refers to the sufficient decrease condition $f(\tilde\theta)\le f_{\ell}-c_{1}D_\ell$ in Alg.~\ref{alg:disc_cascade_tlh2_lqo_grad_short}.}
Since the $A$-matrices of the LQO systems~\eqref{eq:d-LQO} considered in this work are diagonal, \tblack{the large-scale Sylvester equations~\eqref{eq:Ytil_t}, \eqref{eq:Ztil_t}, \eqref{eq:Zbar_t}, and \eqref{eq:Ptil_inf} required for the gradient computations can be solved elementwise, and hence the gradients can be computed efficiently without using a general Sylvester solver.}

\begin{theorem}\label{thm:convergence-ssm}
Let $\{\hat \theta_\ell\}_{\ell\ge 0}$ with
$\hat \theta_\ell := \Bigl\{\,\bigl(\hat\Lambda^{(i)}_\ell,\; \hat B^{(i)}_\ell,\; \hat C^{(i)}_\ell,\; \hat U^{(i)}_{\ell}\bigr)\,\Bigr\}_{i=1}^{\xi}$
be the sequence generated by Alg.~\ref{alg:disc_cascade_tlh2_lqo_grad_short}.
Assume that $\{\hat \theta_\ell\}$ is bounded and, for every $i\in\{1,\dots,\xi\}$, the limit 
$\lim_{\ell\to\infty}\operatorname{diag}(\hat \Lambda_\ell^{(i)})$ exists and satisfies stability. 
Then $\{\hat \theta_\ell\}$ converges to a stationary point of problem~\eqref{eq:opt_original}.
\end{theorem}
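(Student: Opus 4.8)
The plan is to treat Theorem~\ref{thm:convergence-ssm} as a standard convergence statement for a gradient-descent method with Armijo backtracking on a smooth objective over an open constraint set, and to exploit the two standing hypotheses (bounded iterates; existence of a strictly stable limit of the diagonal parts) to reduce everything to a compact, interior region on which $\nabla f$ is Lipschitz.

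First I would record the regularity of the objective. The feasible set $\prod_{i=1}^{\xi}\mathcal C_i$ is open, being a product of the open discrete-time stability regions $\mathbb C_-^{r_i}$ and full complex matrix spaces. On this set, away from the locus where some $\varphi^{(i)}(\hat{\mathcal S}^{(i)})$ vanishes, the solutions of the finite-horizon Lyapunov/Sylvester equations~\eqref{eq:Ptilde_t}--\eqref{eq:Phat_t} and~\eqref{eq:Ytil_t}--\eqref{eq:Zbar_r_t} depend analytically on the data whenever $\hat A^{(i)}$ is stable, so each $\varphi^{(i)}$ is analytic and $f=\sum_i \tilde G_i\sqrt{\varphi^{(i)}}$ is $C^1$ with gradient as in Theorem~\ref{thm:gradients}, composed with the smooth reparametrizations $\hat A^{(i)}=\diag(\hat\Lambda^{(i)})$ and $\hat M_j^{(i)}=(\hat U_j^{(i)})^{\ast}\hat U_j^{(i)}$ (the latter handled by Lemma~\ref{lem:M_to_U}); consequently stationarity in the $(\hat\Lambda,\hat B,\hat C,\hat U)$ coordinates is equivalent to $\nabla f=0$. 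Here one uses generically that $r_i<n$ keeps $\varphi^{(i)}$ away from zero along the iterates — the only point where the square root is non-smooth — or one simply restricts the analysis to the region where $f$ is $C^1$. Because the backtracking in Algorithm~\ref{alg:disc_cascade_tlh2_lqo_grad_short} rejects any trial point violating stability, the whole sequence $\{\hat\theta_\ell\}$ stays feasible.

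Next I would pass to a compact interior set and run the classical line-search argument. By boundedness and the assumption that $\lim_{\ell}\diag(\hat\Lambda^{(i)}_\ell)$ exists and is stable for each $i$, there are $\rho\in(0,1)$ and $R>0$ such that, for all large $\ell$, every $\hat A^{(i)}_\ell$ has spectral radius $\le\rho$ and $\|\hat\theta_\ell\|\le R$; hence the tail lies in a compact set $\mathcal K$ contained in the interior of $\prod_i\mathcal C_i$. On $\mathcal K$ the gradient $\nabla f$ is Lipschitz with some constant $L_f$, and $f\ge 0$. The descent direction $d_\ell=-\nabla f(\hat\theta_\ell)$ satisfies $\langle\nabla f(\hat\theta_\ell),d_\ell\rangle=-\|\nabla f(\hat\theta_\ell)\|^2$. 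By the descent lemma on $\mathcal K$, the Armijo condition (with parameter $c\in(0,1)$) holds for every step size $\le 2(1-c)/L_f$; and since $\rho<1$ strictly, a ball of fixed radius around the tail still consists of stable parameters, so the stability test is passed for all sufficiently small step sizes, uniformly in large $\ell$. Hence the accepted step sizes obey $\alpha_\ell\ge\bar\alpha>0$ for all large $\ell$, giving $f(\hat\theta_{\ell+1})\le f(\hat\theta_\ell)-c\,\bar\alpha\,\|\nabla f(\hat\theta_\ell)\|^2$. As $\{f(\hat\theta_\ell)\}$ is nonincreasing and bounded below, $\sum_\ell\|\nabla f(\hat\theta_\ell)\|^2<\infty$, so $\nabla f(\hat\theta_\ell)\to0$; by continuity any accumulation point $\hat\theta_\star$ — in particular the limit if the sequence converges — satisfies $\nabla f(\hat\theta_\star)=0$, and being interior to the feasible set it is a stationary point of~\eqref{eq:opt_original}.

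The main obstacle is the uniform lower bound $\bar\alpha$ on accepted step sizes, which is what upgrades "sufficient decrease" to "$\nabla f\to0$." It requires (i) a uniform Lipschitz constant for $\nabla f$ along the iterates — secured by the compact-interior reduction, which is exactly where boundedness and the strictly-stable-limit hypothesis enter — and (ii) a guarantee that the stability constraint never forces the step below a fixed threshold, which holds because the limiting diagonal parts have spectral radius strictly less than one, so feasibility persists on a fixed-radius neighborhood of the tail. A minor technical point is the non-differentiability of $\sqrt{\varphi^{(i)}}$ at $\varphi^{(i)}=0$; this is avoided generically when $r_i<n$, and in any case the gradient formulas of Theorem~\ref{thm:gradients} that the algorithm uses already presuppose $\varphi^{(i)}>0$ through the factor $K_i=2\sqrt{\varphi^{(i)}}$.
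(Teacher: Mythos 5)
Your argument is the classical Armijo/backtracking analysis: compact interior tail, Lipschitz gradient there, a uniform lower bound on accepted step sizes, hence $\sum_\ell\|\nabla f(\hat\theta_\ell)\|^2<\infty$, so $\nabla f(\hat\theta_\ell)\to 0$ and every accumulation point is stationary. That conclusion is strictly weaker than what Theorem~\ref{thm:convergence-ssm} asserts, namely that the sequence $\{\hat\theta_\ell\}$ itself \emph{converges} to a stationary point. Your hedge ``in particular the limit if the sequence converges'' is exactly where the gap sits: nothing in your argument shows the limit exists. The hypothesis of the theorem only gives convergence of the diagonal parts $\operatorname{diag}(\hat\Lambda^{(i)}_\ell)$; it says nothing about $\hat B^{(i)}_\ell,\hat C^{(i)}_\ell,\hat U^{(i)}_\ell$, so the accumulation set could a priori be a continuum of stationary points among which the iterates wander (vanishing gradients and vanishing step lengths do not preclude this for a general smooth $f$).

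The paper closes this gap with the Kurdyka--\L ojasiewicz machinery: it verifies the sufficient-decrease condition $f(\hat\theta_{\ell+1})\le f(\hat\theta_\ell)-\tfrac{c_1}{\eta_{\mathrm{init}}}\|s_\ell\|^2$ with $s_\ell=\hat\theta_{\ell+1}-\hat\theta_\ell$, deduces $\sum_\ell\|s_\ell\|^2<\infty$, establishes the relative-error condition $\|\nabla f(\hat\theta_{\ell+1})\|\le (L_\nabla+\underline\eta^{-1})\|s_\ell\|$ using a positive lower bound $\underline\eta$ on the accepted $\eta_\Lambda$ (available because the iterates stay away from the instability boundary, by your same stability-margin reasoning), and then invokes \cite[Thm.~3.2]{attouch2013convergence} together with the argument of \cite[Thm.~2]{sakamoto2025datadrivenh2modelreduction} to upgrade subsequential stationarity to convergence of the whole sequence. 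So the front half of your proof (smoothness of $f$ on the open feasible set, compactness of the tail, Lipschitz gradient, uniform step-size bound from the strict stability margin) matches the paper's ingredients, but the decisive final step --- an abstract convergence theorem exploiting the KL property of the objective --- is missing, and without it (or some substitute such as isolated stationary points) the stated conclusion does not follow. Your side remarks on the nonsmoothness of $\sqrt{\varphi^{(i)}}$ at $\varphi^{(i)}=0$ are reasonable but peripheral; the essential missing idea is the KL-based full-sequence convergence argument.
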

\begin{proof}
See Appendix~\ref{app:proofs}.
\end{proof}

Because the optimization problem~\eqref{eq:opt_original} is nonconvex, the choice of the initial point for Alg.~\ref{alg:disc_cascade_tlh2_lqo_grad_short} is important.
\tblack{As initial reduced models, one can use existing MOR methods for LQO or quadratic-output systems~\cite{PulchNarayan2019BT,GoseaAntoulas2019TwoSided,ReiterGoseaDuffGugercin2025H2,HillebrechtUnger2025Hinf}.
In particular, time-limited MOR methods~\cite{song2024balanced,zulfiqar2024time} are compatible with the setting of this study and can be used as initial reduced models for Alg.~\ref{alg:disc_cascade_tlh2_lqo_grad_short}.
On the other hand, these methods do not necessarily guarantee the stability of the reduced model, and in such cases, as discussed in Section~\ref{sec:experiment}, the performance of the compressed model can significantly deteriorate.
We also note that data-driven methods for LQO systems, such as AAA-LQO~\cite{GoseaGugercin2022AAA}, are relevant, but they are difficult to apply directly to the setting of this study because they require transfer-function or impulse-response data.}

\begin{algorithm}[!htbp]
\caption{Gradient-based method for~\eqref{eq:opt_original}}
\label{alg:disc_cascade_tlh2_lqo_grad_short}
\begin{algorithmic}[1]
\REQUIRE
  Pretrained full-order models, initial ROM parameters $\hat\theta_0$, horizon length $L$, initial step sizes $\eta_{\mathrm{init}}=(\eta_{\Lambda},\eta_B,\eta_C,\eta_U)$, Armijo parameter~$c_1$, backtracking factor~$\rho$, maximum number of iterations $K_{\max}$
\ENSURE
  $\hat\theta = \{(\hat\Lambda^{(i)}, \hat B^{(i)}, \hat C^{(i)}, (\hat U^{(i)}_j)_{j=1}^m)\}_{i=1}^{\xi}$
\FOR{$\ell=0,1,\dots,K_{\max}-1$}
  \STATE Set $f_\ell=f(\hat\theta_\ell)$ and compute $\nabla_{\hat\theta}f_\ell$ by Theorem~\ref{thm:gradients}.
  \STATE Set $\eta\leftarrow\eta_{\mathrm{init}}$.
  \WHILE{true}
    \STATE Propose $\tilde\theta=\hat\theta_\ell-\eta\odot\nabla_{\hat\theta}f_\ell$.
    \STATE \textbf{if} any $\diag(\tilde \Lambda^{(i)})$ is not Schur stable \textbf{then}
           $\eta_{\Lambda}\!\leftarrow\!\rho\,\eta_{\Lambda}$; \textbf{continue}
    \STATE Set $D_\ell:=\eta_{\Lambda}\sum_i\|\nabla_{\hat \Lambda^{(i)}}f_{\ell}\|^2
              +\eta_{B}\sum_i\|\nabla_{\hat B^{(i)}}f_{\ell}\|^2
              +\eta_{C}\sum_i\|\nabla_{\hat C^{(i)}}f_{\ell}\|^2
              +\eta_{U}\sum_{i,j}\|\nabla_{\hat U^{(i)}_j}f_{\ell}\|^2$.
    \STATE \textbf{if} $f(\tilde\theta)\le f_\ell-c_1D_\ell$ \textbf{then}
    $\hat\theta_{\ell+1}\leftarrow\tilde\theta$; \textbf{break}
    \STATE $\eta\leftarrow\rho\eta$.
  \ENDWHILE
\ENDFOR
\end{algorithmic}
\end{algorithm}


\section{Numerical experiments}\label{sec:experiment}
Based on the output error bound derived in Section~\ref{sec:error_analysis} and the proposed method described in Section~\ref{sec:proposed_mor}, we construct compressed models for the IMDb task of the Long Range Arena (LRA) dataset~\cite{tay2021long}.
\tblack{In particular, we solve~\eqref{eq:opt_original} on the pretrained state-space blocks to obtain reduced models that decrease the upper bound on $e_{\xi}$ without retraining.}
The implementation used in this study is based on the code available at~https://github.com/lindermanlab/S5.
\tblack{We also report additional results on WikiText-103 in Appendix~\ref{app:wikitext103}.}

For pretraining, we build a large model using the Deep SSM described in Section~\ref{sec:problem_setting}.
We set $n=128$, $m=64$, $c=1$, $L=4096$, $\xi=4$, and the number of epochs to $30$, while the other hyperparameters are set according to \cite{smith2023simplified}.
During training and inference, linear SSMs discretized by zero-order hold are used.
With this configuration, the pretrained model has 207,490 trainable parameters and achieves 86.66\% test accuracy on the IMDb task.

We test several layerwise reduced dimensions $r_{\mathrm{list}}$, especially those allocating larger reduced dimensions to shallower layers, so as to reduce the upper bound in~\eqref{eq:opt_original}. 
For each $r_{\mathrm{list}}$, we compare Time-Limited Balanced Truncation (TLBT)~\cite{song2024balanced}, Time-Limited $\mathcal{H}^2$ model reduction (TLH2)~\cite{zulfiqar2024time}, and Alg.~\ref{alg:disc_cascade_tlh2_lqo_grad_short} initialized by TLBT or TLH2.
For Alg.~\ref{alg:disc_cascade_tlh2_lqo_grad_short}, we set $K_{\max}=20$, $c_1=10^{-4}$, $\rho=0.5$, and $\eta_{\mathrm{init}}=(1,1,1,1)$.
When TLH2 is used, its initial point is given by TLBT rather than a random stable system.

\tblack{Table~\ref{tab:experiments} reports the relative error for~\eqref{eq:opt_original} and the IMDb test accuracy without retraining.
Here, \textit{TestAcc.} denotes the classification accuracy on the IMDb test dataset.
The results show that a smaller relative error for~\eqref{eq:opt_original} generally leads to higher test accuracy.
In particular, although $r_{\mathrm{list}}=[16]\times4$ and $r_{\mathrm{list}}=[32,16,12,4]$ have the same number of trainable parameters, the latter allocates larger reduced dimensions to shallower layers according to our design principles, thereby achieving a smaller relative error and higher test accuracy.
Here, for $r_{\mathrm{list}}=[32,16,12,4]$, TLH2 generated an unstable reduced model and thus failed to construct a valid compressed model; this can occur because, unlike Alg.~\ref{alg:disc_cascade_tlh2_lqo_grad_short}, TLBT and TLH2 do not explicitly enforce the stability constraint during optimization.
Moreover, Alg.~\ref{alg:disc_cascade_tlh2_lqo_grad_short} with $r_{\mathrm{list}}=[96,48,36,12]$ achieves a relative error of $2.367\times 10^{-3}$ and a test accuracy of $0.8669$, which is comparable to the pretrained model, while reducing the number of trainable parameters from 207,490 to 83,650, i.e., by approximately 60\%.
In addition, relative to the ``Text'' results in \cite[Fig.2]{gwak2024layer}, at a comparable reduced state dimension $r_{\text{list}}=[16]\times 4$, Algorithm~\ref{alg:disc_cascade_tlh2_lqo_grad_short}'s \textit{TestAcc.} is superior.
These results indicate that the proposed output-error bound is useful for designing layerwise reduced dimensions and constructing accurate compressed models without retraining.}

\begin{table}[t]
\centering
\caption{Compressed-model performance on IMDb without retraining.}
\label{tab:experiments}
\setlength{\tabcolsep}{6pt}
\begin{tabular}{@{\extracolsep{\fill}}cccc}
\toprule
$r_{\mathrm{list}}$ 
& Method 
& Relative error for~\eqref{eq:opt_original}
& \textit{TestAcc.} \\
\midrule
$[16]\times4$
& TLBT                 
& $6.330{\times}10^{-1}$ 
& $0.7615$ \\
& TLH2                 
& $6.101{\times}10^{-1}$ 
& $0.7642$ \\
& Alg.\ref{alg:disc_cascade_tlh2_lqo_grad_short} (TLBT init.) 
& $6.266{\times}10^{-1}$ 
& $0.7649$ \\
& Alg.\ref{alg:disc_cascade_tlh2_lqo_grad_short} (TLH2 init.) 
& $6.100{\times}10^{-1}$ 
& $0.7640$ \\
\midrule
$[32,16,12,4]$
& TLBT                 
& $3.147{\times}10^{-1}$ 
& $0.8213$ \\
& Alg.\ref{alg:disc_cascade_tlh2_lqo_grad_short} (TLBT init.) 
& $3.103{\times}10^{-1}$ 
& $0.8166$ \\
\midrule
$[64,32,24,8]$
& Alg.\ref{alg:disc_cascade_tlh2_lqo_grad_short} (TLBT init.)
& $2.161{\times}10^{-2}$ 
& $0.8626$ \\
\midrule
$[96,48,36,12]$
& Alg.\ref{alg:disc_cascade_tlh2_lqo_grad_short} (TLBT init.) 
& $2.367{\times}10^{-3}$
& $0.8669$ \\
\bottomrule
\end{tabular}
\end{table}


\section{Summary}\label{sec:summary}
In this paper, we \tblack{have} derived an \tblack{upper bound on the overall output error} for two Deep SSMs incorporating LQO systems and constructed an optimal $h^2$ MOR \tblack{algorithm} that approximates the final intermediate output.
\tblack{In particular, through theoretical analysis and numerical experiments, we have demonstrated that assigning larger reduced state dimensions to shallow layers and performing MOR to reduce the derived upper bound is effective in reducing the overall output error.}

\tblack{As future work, we will evaluate the proposed method on physical and control-oriented tasks.}
\tblack{We will also investigate MOR methods for more general nonlinear systems, namely systems whose output equation in \eqref{eq:d-LQO-dssm} is not quadratic but given by a $K$th-order polynomial output map, and their application to model compression.}


\section*{Acknowledgment}
This work was supported by JSPS KAKENHI under Grant Numbers 23K28369, 25KJ0986, and 26K03232.

\appendices

\section{Proofs}\label{app:proofs}
\begin{proof}[Proof of Lemma~\ref{lem:LN-Lip}]
Let $P:=I-\frac{1}{m}\mathbf{1}\mathbf{1}^{\top}$, $c:=z-\mu\mathbf{1}= Pz$, and $D:=\mathrm{diag}(\gamma_1)$.
Then, for $\mathrm{LN}_{\gamma_1,\gamma_2}(z)= D\,\frac{c}{\sigma}+\gamma_2$, its Jacobian is 
$\nabla_z \mathrm{LN}=D\Bigl(\frac{I}{\sigma}-\frac{cc^\top}{\sigma^{3}m}\Bigr)P$.
Since $\|P\|_2=1$, $\|D\|_2=\|\gamma_1\|_{{\infty}}$, and $\bigl\|\tfrac{I}{\sigma}-\tfrac{cc^\top}{\sigma^{3}m}\bigr\|_2\le \tfrac{1}{\sigma}$ (because $\sigma^2\ge\varepsilon$), it follows for all $z$ that
$\|\nabla_z \mathrm{LN}\|_2\le \frac{\|\gamma_1\|_{{\infty}}}{\sigma}\le \frac{\|\gamma_1\|_{{\infty}}}{\sqrt{\varepsilon}}$,
which gives the upper bound.

For the lower bound, at $c=0$ we have $\sigma=\sqrt{\varepsilon}$ and $\nabla_z\mathrm{LN}=\tfrac{1}{\sqrt{\varepsilon}}DP$, hence 
$\operatorname{Lip}(\mathrm{LN})
\;\ge\;\frac{1}{\sqrt{\varepsilon}}\max_{\|x\|=1}\|DPx\|
\;\ge\;\frac{\|\gamma_1\|_{\infty}}{\sqrt{\varepsilon}}\sqrt{1-\frac{1}{m}}$.
\end{proof}

\begin{proof}[Proof of Theorem~\ref{thm:convergence-ssm}]
The proof essentially follows from~\cite[Thm. 2]{sakamoto2026data}.
By construction, $f$ in~\eqref{eq:opt_original} is smooth.
By the boundedness assumption, the closure $K:=\overline{\{\hat\theta_\ell\}}$ is compact.
Hence, by the extreme value theorem, $\nabla f$ is $L_{\nabla}:=\sup_{x\in K}\|\nabla^2 f(x)\|<\infty$--Lipschitz on $K$.

For $s_\ell:=\hat\theta_{\ell+1}-\hat\theta_\ell$, we have
$\|s_\ell\|^2
=\eta_\Lambda^2\|\nabla_{\hat \Lambda} f\|^2+\eta_B^2\|\nabla_{\hat B} f\|^2
+\eta_C^2\|\nabla_{\hat C} f\|^2+\eta_U^2\|\nabla_{\hat U} f\|^2
\le \eta_{\mathrm{init}}\; D_\ell$, and
$f(\hat\theta_{\ell+1}) \le f(\hat\theta_\ell)
-\frac{c_1}{\eta_{\mathrm{init}}}\,\|s_\ell\|^2$.
Since $f$ is bounded below on $K$, we obtain 
$\sum_{\ell=0}^\infty \|s_\ell\|^2<\infty$, hence $\|s_\ell\|\to 0$.

Because the iterates remain at a positive distance from the instability boundary, there exists $\underline\eta>0$ with $\eta_\Lambda\ge \underline\eta$ for all accepted steps. From
$\eta_\Lambda \le \min\{\eta_B,\eta_C,\eta_U\}
$, we get $\|\nabla f(\hat\theta_{\ell+1})\|\le (L_\nabla+\underline\eta^{-1})\|s_\ell\|$.

For the rest, using the same proof as in~\cite[Thm. 2]{sakamoto2026data} and the result in~\cite[Thm.~3.2]{attouch2013convergence}, we can see that Alg.~\ref{alg:disc_cascade_tlh2_lqo_grad_short} converges to a stationary point in $K$.
\end{proof}    

\section{WikiText-103 Results}\label{app:wikitext103}
\tblack{We additionally evaluate the proposed method on WikiText-103~\cite{merity2016pointer}, using perplexity (PPL; lower is better) as the metric.
The pretrained model is constructed with $n=256$, $m=256$, $c=1$, $L=1024$, and $\xi=8$, and achieves a test PPL of $42.12$.}

\tblack{Table~\ref{tab:wikitext103} reports the compression results, where all models are obtained by Alg.~\ref{alg:disc_cascade_tlh2_lqo_grad_short} with TLBT initialization.
Note that the bottom two rows in the table have the same number of trainable parameters.
The results show that reducing the relative error for~\eqref{eq:opt_original} improves the PPL.
In particular, $r_{\mathrm{list}}=[192,176,160,160,160,160,144,128]$ allocates larger reduced dimensions to shallower layers, and achieves a test PPL of $42.87$, which is close to that of the pretrained model.
Here, large relative errors sharply degrade the PPL due to error accumulation in next-token prediction.
These results suggest that the proposed bound is also useful for designing compressed models.}

\begin{table}[t]
\centering
\caption{Compressed-model performance on WikiText-103 without retraining.}
\label{tab:wikitext103}
\scriptsize
\setlength{\tabcolsep}{4pt}
\begin{tabular}{ccc}
\toprule
$r_{\mathrm{list}}$ 
& Relative error for~\eqref{eq:opt_original}
& \textit{TestPPL.} \\
\midrule
$[96]\times8$
& $5.377{\times}10^{-1}$
& $111.94$ \\
\midrule
$[160]\times8$
& $2.336{\times}10^{-1}$
& $43.83$ \\
\midrule
\makecell{$[192,176,160,160,160,160,144,128]$}
& $\mathbf{9.834{\times}10^{-2}}$
& $\mathbf{42.87}$ \\
\bottomrule
\end{tabular}
\end{table}

\bibliographystyle{IEEEtran}
\bibliography{main.bib}
\end{document}